\documentclass[sigplan,nonacm]{acmart}
\setcopyright{none}
\renewcommand\footnotetextcopyrightpermission[1]{}

\usepackage{amsmath,amsfonts}

\usepackage{amssymb}
\usepackage{graphicx}
\usepackage{textcomp}
\usepackage{xcolor}
\usepackage{algpseudocode}
\usepackage{algorithm}
\usepackage{tikz}
\usepackage{booktabs}
\usepackage{caption}
\usepackage{enumitem}
\usepackage{mdframed}
\usepackage{multirow}
\usepackage{cancel}

\usepackage{quantikz}

\def\BibTeX{{\rm B\kern-.05em{\sc i\kern-.025em b}\kern-.08em
    T\kern-.1667em\lower.7ex\hbox{E}\kern-.125emX}}
    
\begin{document}

\title{Linear Complexity Fermionic Simulation on Quantum Devices with Hardware Connectivity Constraints}
\subtitle{Based on the version submitted for peer review in April 2026, with minor revisions.}

\author{Xiangyu Gao}
\authornote{These authors contributed equally to this work.}
\affiliation{
  \institution{University of Washington}
  \city{Seattle}
  \state{WA}
  \country{USA}
}
\email{xiangyug@cs.washington.edu}

\author{Winston Li}
\authornotemark[1]
\affiliation{
  \institution{Rutgers University}
  \city{Piscataway}
  \state{NJ}
  \country{USA}
}
\email{wl605@scarletmail.rutgers.edu}

\author{Jiakang Li}
\affiliation{
  \institution{Rutgers University}
  \city{Piscataway}
  \state{NJ}
  \country{USA}
}
\email{jiakang.li@rutgers.edu}

\author{Zirui Li}
\affiliation{
  \institution{Rutgers University}
  \city{Piscataway}
  \state{NJ}
  \country{USA}
}
\email{zirui.li@rutgers.edu}

\author{Yipeng Huang}
\affiliation{
  \institution{Rutgers University}
  \city{Piscataway}
  \state{NJ}
  \country{USA}
}
\email{yipeng.huang@rutgers.edu}

\author{Costin Iancu}
\affiliation{
  \institution{Lawrence Berkeley National Laboratory}
  \city{Berkeley}
  \state{CA}
  \country{USA}
}
\email{cciancu@lbl.gov}

\author{Eddy Z. Zhang}
\affiliation{
  \institution{Rutgers University}
  \city{Piscataway}
  \state{NJ}
  \country{USA}
}
\email{eddy.zhengzhang@gmail.com}

\newcommand{\cut}[1]{}
\newcommand{\eg}{e.g., }
\newcommand{\ie}{i.e., }
\newcommand{\etal}{{et al.}\xspace}
\newcommand{\vs}{{vs.} }
\newcommand{\etc}{{etc.}\xspace}
\newenvironment{parafont}{\fontfamily{ptm}\selectfont}{}
\newcommand{\Para}[1]{\par\vspace{1pt}\noindent\begin{parafont}\textbf{\textit{#1}}\end{parafont}}

\newcommand{\Sec}[1]{\S\ref{sec:#1}}
\newcommand{\Fig}[1]{Fig.\ref{fig:#1}}
\newcommand{\ct}{\small \tt }
\newcommand{\nop}[1]{}

\newcommand{\xiangyu}[1]{\textcolor{blue}{[XG: #1]}}
\newcommand{\winston}[1]{\textcolor{red}{[WL: #1]}}
\newcommand{\eddy}[1]{\textcolor{green}{[Eddy: #1]}}

\newcommand{\fixit}[1]{{\color{red}{#1}}}

\newcommand{\squishlist}{
   \begin{list}{$\bullet$}
    { \setlength{\itemsep}{0pt}      \setlength{\parsep}{3pt}
      \setlength{\topsep}{3pt}       \setlength{\partopsep}{0pt}
      \setlength{\leftmargin}{3.5mm} \setlength{\labelwidth}{1em}
      \setlength{\labelsep}{0.5em} } }

\newcommand{\squishend}{
    \end{list}  }
\newcommand{\subsubsubsection}[1]{\paragraph{#1}}
\newcommand{\sysname}{Accordion}

\begin{abstract}
Simulating fermionic systems on quantum hardware requires compiling
fermionic Hamiltonians into executable quantum circuits. Existing
approaches treat each compilation stage independently, applying
heuristics with localized objectives that produce circuits with
superquartic gate count and depth scaling and compilation times
reaching several hours for large instances.  We present \sysname{}, an
end-to-end framework that co-designs the fermion-to-qubit mapping with
circuit synthesis and hardware routing. \sysname{} fixes the
Jordan--Wigner mapping, which despite its higher Pauli weight produces
Pauli operators with structural regularity that enables provably
efficient circuit generation. For full-rank all-to-all electronic
structure Hamiltonians, we prove $O(N^4)$ gate count and circuit
depth, matching the information-theoretic lower bound imposed by the
$\Theta(N^4)$ second excitation terms.  On linear, IBM heavy-hex,
and square grid architectures, \sysname{} reduces gate count by
up to $79\%$ and circuit depth by up to $77\%$ relative to the best baseline.
\end{abstract}

\maketitle

\section{Introduction}
\label{sec:intro}

Simulation of fermionic systems is fundamental to quantum chemistry and
condensed-matter physics. Computing ground-state energies of molecular
Hamiltonians enables applications ranging from drug discovery to materials
design. Variational quantum eigensolvers (VQE) have emerged as a leading
approach for simulating such systems on modern quantum hardware by
variationally approximating the ground state of a fermionic Hamiltonian.

Executing VQE workloads on quantum hardware requires transforming
high-level fermionic Hamiltonians into hardware-executable quantum
circuits through a multi-stage compilation pipeline. First, fermionic
operators are encoded as sums of Pauli strings. These Pauli operators are
then synthesized into logical quantum circuits, which are further compiled
into physical circuits satisfying the device's connectivity constraints.

Prior work addresses each stage of this pipeline in isolation. At the
encoding stage, various fermion-to-qubit mappings---including
Jordan--Wigner (JW)~\cite{JW}, Bravyi--Kitaev (BK)~\cite{BK},
HATT~\cite{hatt}, and Fermihedral~\cite{Fermihedral}---have been proposed
to optimize intermediate metrics such as Pauli weight. At the circuit
synthesis stage, tools such as Tetris~\cite{Tetris} and
Paulihedral~\cite{Paulihedral} exploit structural similarities among Pauli
strings to reduce gate count and circuit depth via heuristics.
Figure~\ref{fig:CircuitSynthWF} illustrates that the space of possible
compilation paths is the cross product of mapping strategies and circuit
synthesis methods.

\begin{figure}[t]
    \centering
    \includegraphics[width=0.49\textwidth]{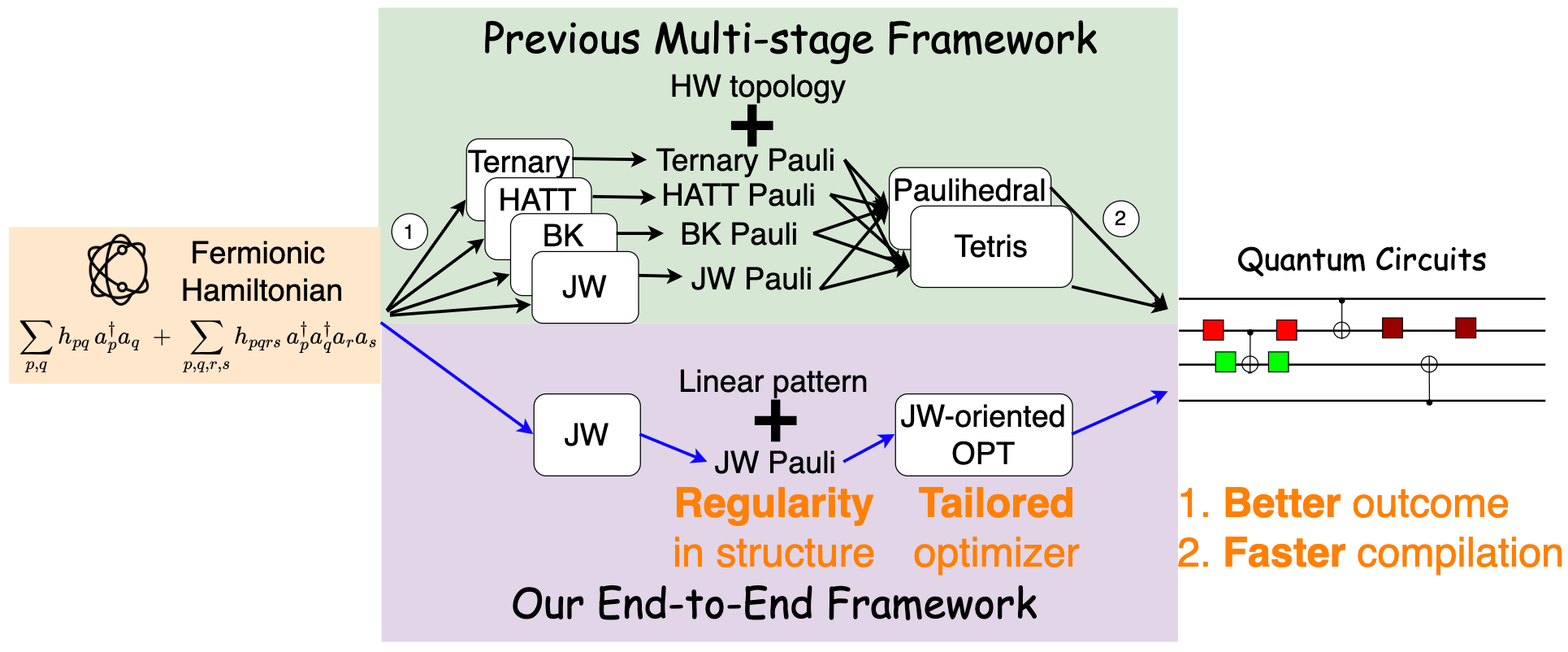}
    \vspace{-0.3in}
    \caption{State-of-the-art approaches generate quantum circuits in two
    stages: \textcircled{1} map fermionic operators to qubit operators, and
    \textcircled{2} generate hardware circuits using heuristics. \sysname{}
    takes an end-to-end approach by fixing the mapping and designing
    tailored circuit generation algorithms.}
    \label{fig:CircuitSynthWF}
\end{figure}

However, optimizing these stages in isolation is fundamentally limited. The
intermediate metrics that individual stages optimize---most notably Pauli
weight---do not reliably predict the quality of the final compiled
circuit. For instance, the BK mapping reduces Pauli weight to
$O(\log N)$ compared to JW's $O(N)$, yet the resulting Pauli strings
exhibit less structural regularity. Because the effectiveness of
downstream circuit synthesis techniques depends heavily on exploiting
structural patterns among Pauli strings, BK's lower Pauli weight can
paradoxically yield higher gate count and circuit depth in the final
circuit. Furthermore, as problem size grows, all existing combinations
of mappers and synthesizers exhibit superquartic scaling: both gate count
and circuit depth grow substantially faster than the number of Pauli
strings (Figure~\ref{fig:ScalabilityA2A}), and compilation times reach
several hours for large benchmarks.

The ultimate measure of compilation quality is end-to-end circuit
efficiency---gate count and circuit depth on the target hardware---not
intermediate metrics such as Pauli weight. We therefore argue that the
compilation pipeline should be designed and optimized holistically.

In this paper, we present \sysname{}, an end-to-end framework that
compiles electronic structure fermionic Hamiltonians directly to hardware
circuits. Rather than treating the mapping and synthesis stages
independently, \sysname{} fixes the JW mapping. This produces Pauli
operators with a provably regular, predictable structure, and develops
circuit synthesis and hardware routing algorithms that are specifically
tailored to exploit that structure. The key insight is that JW's
structural regularity enables a class of gate cancellation and qubit
scheduling optimizations that are not accessible to general-purpose
synthesizers operating on the less uniform outputs of BK or other
mappings. \sysname{} further exploits the observation that a linear
connectivity path spanning most or all qubits is a common feature across
modern quantum hardware topologies, providing a unified compilation
substrate for diverse devices.

The major contributions of this paper are as follows.

\begin{itemize}

    \item \textbf{End-to-end co-design framework.} We build \sysname{},
    the first end-to-end framework for compiling electronic structure
    fermionic Hamiltonians that co-designs the fermion-to-qubit mapping
    with the circuit synthesis and hardware routing stages. By fixing the
    JW mapping and specializing all downstream compilation to its
    structured output, \sysname{} achieves optimizations that
    stage-isolated approaches cannot.

    \item \textbf{Counter-intuitive mapping choice with provable benefits.}
    We demonstrate that contrary to conventional wisdom, the JW mapping
    outperforms BK, HATT, and balanced ternary tree approaches
    in end-to-end circuit quality when paired with the tailored synthesis
    techniques developed in this paper.
    This is surprising, since the JW mapping has the highest average Pauli weight among all standard
    mappings.
    This result reveals that
    minimizing Pauli weight is not a reliable proxy for minimizing
    hardware circuit cost.

    \item \textbf{Pauli string permutation and scheduling algorithms with
    provable linear overhead.} We develop Pauli string grouping,
    permutation, and scheduling algorithms that maximize CNOT gate
    cancellation across strings and minimize SWAP overhead during hardware
    routing. We prove that transitions between adjacent string groups
    require only a constant number of SWAP operations, yielding $O(1)$
    amortized depth per Pauli string.

    \item \textbf{Proven $O(N^4)$ circuit complexity for all-to-all
    Hamiltonians.} We formally prove that \sysname{} produces circuits
    with $O(N^4)$ gate count and circuit depth for full-rank all-to-all
    electronic structure Hamiltonians, matching the information-theoretic
    lower bound imposed by the $\Theta(N^4)$ number of second excitation
    terms. No existing approach achieves this bound.

    \item \textbf{Substantial empirical improvements.} We evaluate
    \sysname{} against state-of-the-art baselines on representative
    quantum chemistry benchmarks across NISQ (IBM heavy-hex) and
    fault-tolerant (square grid) architectures. \sysname{} reduces gate
    count by up to $79.56\%$ (avg. $44.96\%$), reduces circuit depth by
    up to $77.24\%$ (avg. $51.20\%$), while increasing compile time
    by at most $87.93\%$ (avg. $15.89\%$) relative to the best available
    baseline.

\end{itemize}

\section{Background}
\label{background}

In this section, we introduce essential background on fermionic
Hamiltonians and the current quantum circuit synthesis pipeline.
Foundational concepts in quantum computing are covered in standard
references~\cite{QCQI}.

\subsection{Fermionic Hamiltonians}

Fermionic Hamiltonians are the standard mathematical model for
interacting many-body fermionic systems. They arise in electronic
structure problems in quantum chemistry~\cite{modernQChem} and in
lattice models for condensed-matter physics~\cite{QTheoryManyParticle,
CondensedMFT}, where the primary computational target is typically the
ground state and ground-state energy, which determine equilibrium
properties of molecules and materials~\cite{aspuru2005simulated,
cao2019quantum, google2020hartree, stanisic2022observing}.

Several types of fermionic Hamiltonians arise in quantum simulation.
Electronic structure Hamiltonians~\cite{mcquarrie2008quantum} model
molecular systems; the Fermi--Hubbard model~\cite{CondensedMFT} models
condensed-matter systems. Among these, the electronic structure
Hamiltonian is of particular importance because it directly captures the
many-electron problem underlying chemistry and materials science, and
serves as the primary target for quantum simulation algorithms. The
electronic Hamiltonian takes the form

\begin{equation}
H = \sum_{p,q} h_{pq}\,a_p^\dagger a_q
\;+\;\tfrac{1}{2}\sum_{p,q,r,s} h_{pqrs}\,a_p^\dagger a_q^\dagger a_r a_s,
\label{eq:elec-hamiltonian}
\end{equation}

\noindent where $a_p^\dagger$ and $a_p$ are fermionic creation and
annihilation operators, and the coefficients $h_{pq}$ and $h_{pqrs}$
are one- and two-electron integrals that vary by molecule.

To compute the ground-state energy, the Variational Quantum Eigensolver
(VQE)~\cite{peruzzo2014variational} formulates the problem as a
variational optimization over parameterized quantum circuits. The
structure of these circuits is determined by the choice of ansatz.
Among the various proposals~\cite{mcclean2016theory, tilly2022variational},
the Unitary Coupled Cluster Singles and Doubles (UCCSD)
ansatz~\cite{uccsd} is one of the most widely used, and its operator
structure directly mirrors Equation~(\ref{eq:elec-hamiltonian}): single
excitations $a_p^\dagger a_q$ correspond to one-body terms, and double
excitations $a_p^\dagger a_q^\dagger a_r a_s$ correspond to two-body
terms. This paper focuses on the Hamiltonian form in
Equation~(\ref{eq:elec-hamiltonian}) and its full compilation from
fermionic operators to hardware circuits, with the UCCSD ansatz as the
primary benchmark workload.

\subsection{The Compilation Pipeline}

Because simulating fermionic systems on classical computers requires
exponential resources, quantum hardware provides a natural execution
platform. Existing approaches follow a two-stage compilation pipeline
(Figure~\ref{fig:CircuitSynthWF}).

The \emph{fermion-to-qubit} stage encodes fermionic operators as Pauli
operators, representing the Hamiltonian as a sum of Pauli strings. The
central figure of merit at this stage is Pauli weight: the number of
non-identity factors in each Pauli string. Table~\ref{tab:mapping}
summarizes the main mapping strategies. Jordan--Wigner~\cite{JW} is the
most direct mapping, producing Pauli strings with regular, predictable
structure at the cost of $O(N)$ Pauli weight. Bravyi--Kitaev~\cite{BK}
reduces Pauli weight to $O(\log N)$ via a tree-based encoding, but at
the cost of less regular operator patterns. HATT~\cite{hatt} constructs
an adaptive ternary tree tuned to the input Hamiltonian.
Fermihedral~\cite{Fermihedral} uses a SAT solver to find the
globally minimum Pauli weight mapping, but does not scale to large
problems. As we discuss in Section~\ref{sec:motivation}, structural
regularity among Pauli strings---which only the JW mapping reliably
provides---is ultimately more important for end-to-end circuit quality
than Pauli weight alone.

The \emph{qubit-to-circuit} stage compiles Pauli operators into quantum
circuits executable on hardware. This involves decomposing exponentials
of Pauli strings into gate sequences and routing them to satisfy
hardware connectivity constraints. Representative approaches---Tetris
\cite{Tetris} and Paulihedral \cite{Paulihedral}---use heuristics to
reduce gate count and circuit depth by exploiting gate cancellation and
minimizing SWAP routing overhead. The effectiveness and compilation time
of these methods depend strongly on the structure of their input Pauli
strings and the target architecture. For large problems, compilation
time can grow substantially as the heuristics explore increasingly large
search spaces.

\begin{table}[t]
\begin{small}
\caption{Fermion-to-qubit mappings and their tradeoffs.}
\label{tab:mapping}
\end{small}
\resizebox{\columnwidth}{!}{
\begin{tabular}{lcc}
\toprule
\textbf{Mapping} & \textbf{Advantage} & \textbf{Limitation} \\
\midrule
JW~\cite{JW}          & Regular operator structure   & High Pauli weight \\
BK~\cite{BK}          & Low Pauli weight             & Irregular operator structure \\
HATT~\cite{hatt}      & Input-adaptive mapping       & Less regular structure \\
Fermihedral~\cite{Fermihedral} & Minimum Pauli weight & Limited scalability \\
\bottomrule
\end{tabular}}
\end{table}

\section{Motivation}
\label{sec:motivation}

The end user ultimately cares about end-to-end circuit quality: gate
count and circuit depth on the target hardware. We examine the
state-of-the-art compilation workflow and identify four observations
that motivate our approach.

\begin{figure}[t]
    \centering
    \includegraphics[width=0.45\textwidth]{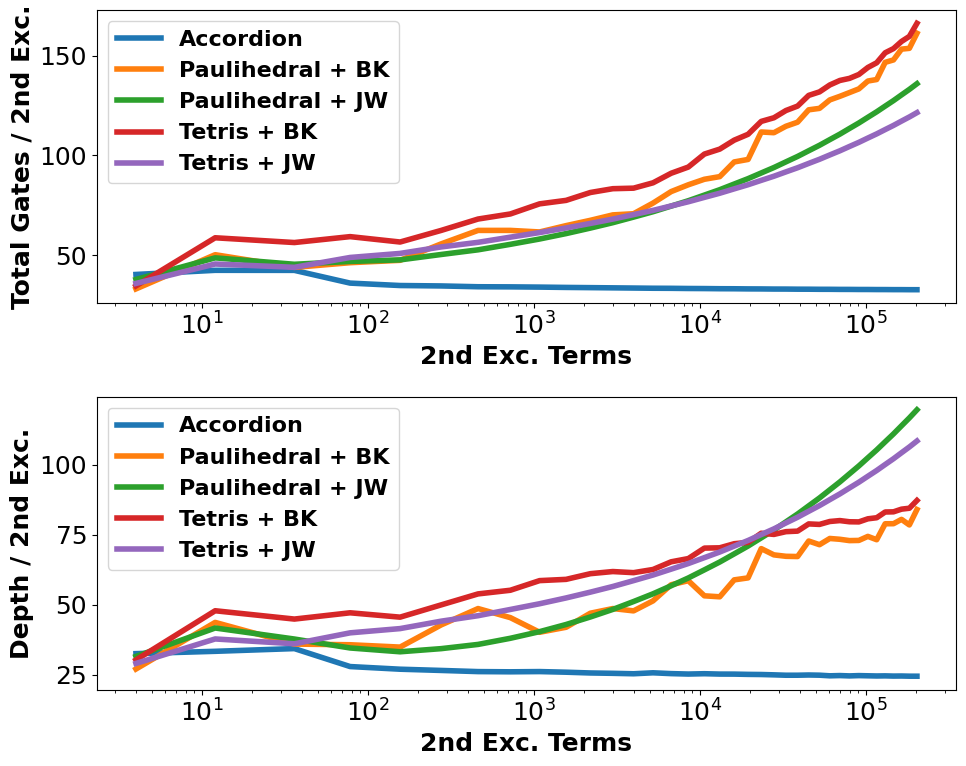}
    \vspace{-0.1in}
    \caption{Gate count and circuit depth normalized by the number of
    second excitation terms, for all-to-all Hamiltonians on a linear
    architecture. All state-of-the-art approaches exhibit superlinear
    scaling relative to the number of second excitation terms.
    \sysname{} achieves near-constant normalized cost and outperforms all
    baselines for $N \ge 11$ qubits.}
    \label{fig:ScalabilityA2A}
\end{figure}

\noindent\textbf{Observation 1: Lower Pauli weight does not imply lower
gate count or circuit depth.}
Existing pipelines optimize each stage independently, with the
fermion-to-qubit stage targeting Pauli weight under the assumption that
lower weight implies lower hardware cost. This assumption does not hold
in practice. Figure~\ref{fig:misc_pauli_weight} shows that for several
UCCSD benchmarks, the BK mapping---which produces lower Pauli weight
than JW---yields a larger final gate count than JW when both are
compiled with the same synthesizer (Paulihedral). This occurs because
the circuit synthesis techniques in the second stage rely on structural
patterns among Pauli strings to achieve gate cancellation; BK's less
regular operator structure reduces the effectiveness of these
techniques, negating its Pauli weight advantage.

Additionally, Figure~\ref{fig:ScalabilityA2A} shows that as the number
of qubits $N$ increases, all existing mapper--synthesizer combinations
produce gate count and circuit depth that grow significantly faster than
the number of second excitation terms, which scales as $O(N^4)$.

\begin{figure}[t]
    \centering
    \includegraphics[width=0.45\textwidth]{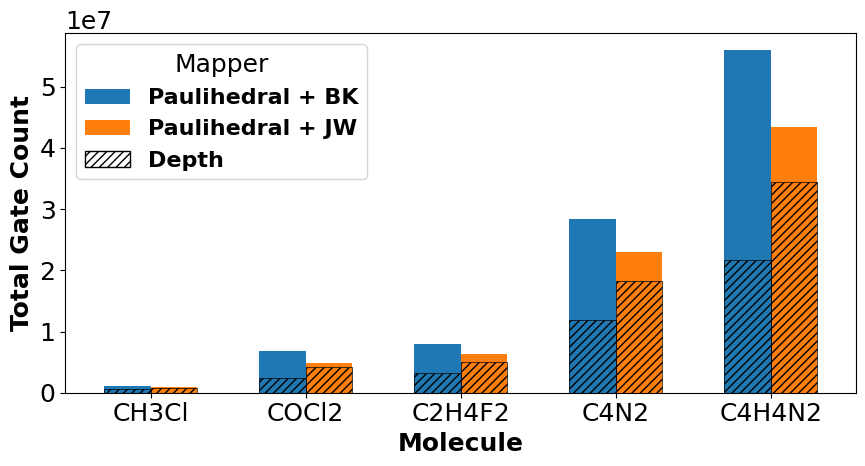}
    \vspace{-0.1in}
    \caption{UCCSD benchmarks on a linear architecture. Despite producing
    Pauli strings with lower Pauli weight, BK yields larger final gate
    count than JW on several benchmarks when compiled with Paulihedral.}
    \label{fig:misc_pauli_weight}
\end{figure}

\begin{figure}[t]
    \centering
    \includegraphics[width=0.49\textwidth]{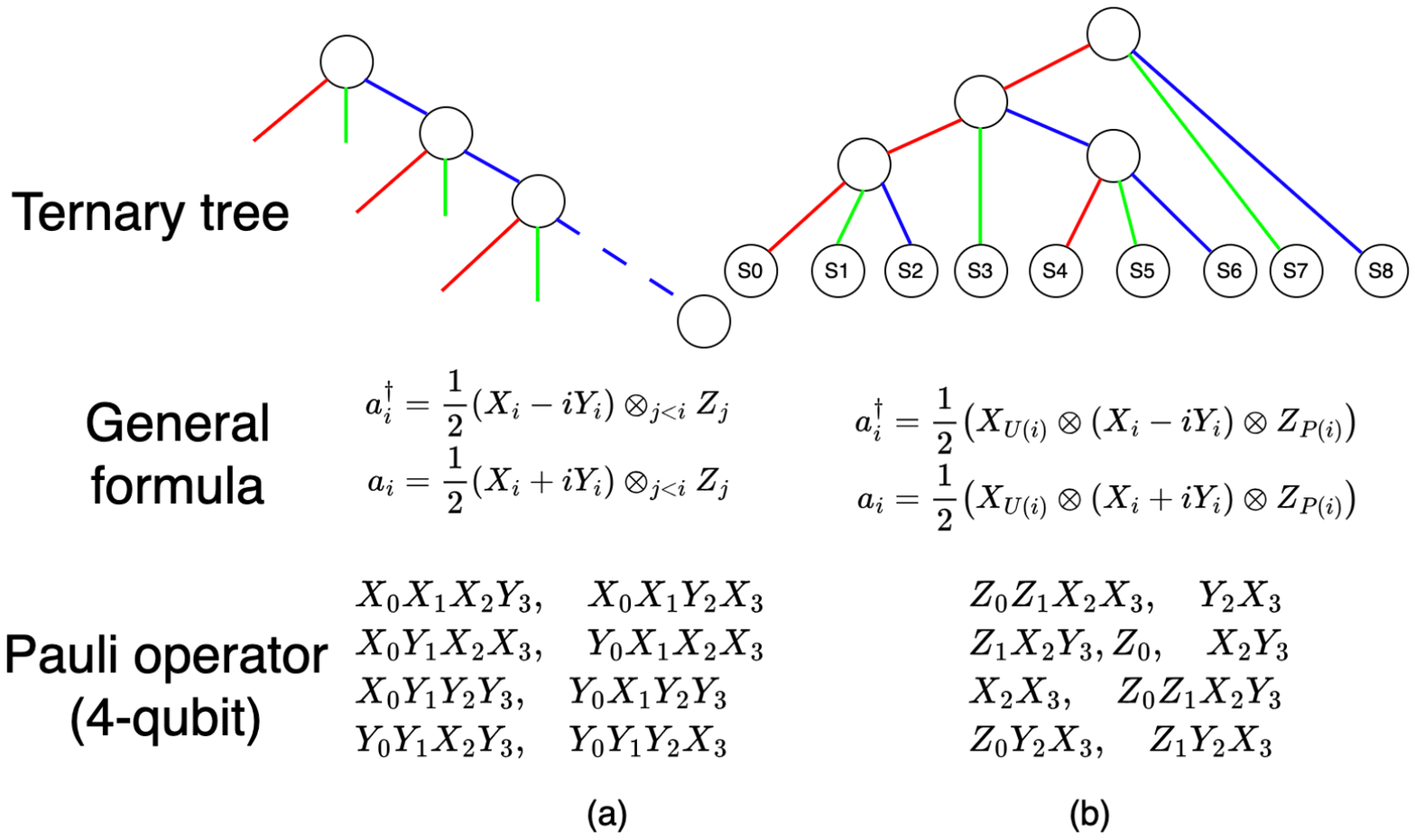}
    \vspace{-0.4in}
    \caption{Comparison of double-excitation Pauli operators generated by
    JW and BK for a 4-qubit example. (a) JW uses a linear (degenerate)
    ternary tree and produces Pauli strings with a regular, predictable
    pattern. (b) BK uses a tree-based parity encoding and produces less
    uniform Pauli strings.}
    \label{fig:BK}
\end{figure}

\noindent\textbf{Observation 2: The JW mapping produces Pauli operators
with more regular structure.}
Different mappings encode fermionic information into Pauli operators
using different strategies. In JW, each fermionic operator maps to a
Pauli string with a linear prefix structure, resulting in regular,
predictable patterns across all output strings. In BK, encoding relies
on tree-based parity (e.g., Fenwick trees), so the operator structure
depends on the specific tree topology and node indexing, yielding less
uniform patterns. Figure~\ref{fig:BK} illustrates this for a 4-qubit
example: JW produces Pauli strings that share a consistent positional
structure, whereas BK strings are more scattered and irregular. This
regularity is the key property that \sysname{} exploits.

\noindent\textbf{Observation 3: Linear connectivity is a widely shared
feature of quantum hardware and provides a useful compilation substrate.}
Although modern quantum devices exhibit a variety of hardware
topologies, a linear path spanning most or all qubits is a common
structural feature across architectures (detailed in
Appendix~\ref{appendix:linear-pattern}). Moreover, a linear path
represents the most restricted form of connectivity: any compilation
strategy that succeeds on a linear architecture will be valid on any
richer topology. Existing circuit synthesizers handle arbitrary target
topologies through general-purpose heuristics, but this generality
prevents them from exploiting the specialized optimizations that become
available when a linear structure is assumed. The contiguous Z-prefix
structure of JW's Pauli operators maps naturally onto a linear qubit
layout, which motivates our choice to pair JW with linear-connectivity
compilation as the foundation of \sysname{}.

\noindent\textbf{Observation 4: The lack of vertical integration across
compilation stages creates unavoidable suboptimalities.}
Because each stage of the existing pipeline is designed to accept
arbitrary outputs from the preceding stage, individual stages must
remain general-purpose. This generality prevents them from exploiting
structural properties of specific mappings, and forces later stages to
use heuristics capable of handling diverse inputs. The result is that no
single mapper--synthesizer combination dominates all others across
benchmarks: as shown in Figure~\ref{fig:misc_comparison}, the
combination that performs best varies by molecule, so users must test
all combinations to find the optimal one. A vertically integrated
pipeline that fixes the mapping and specializes all downstream
compilation accordingly can avoid this search and unlock global
optimization opportunities.

\begin{figure}[t]
    \centering
    \includegraphics[width=0.45\textwidth]{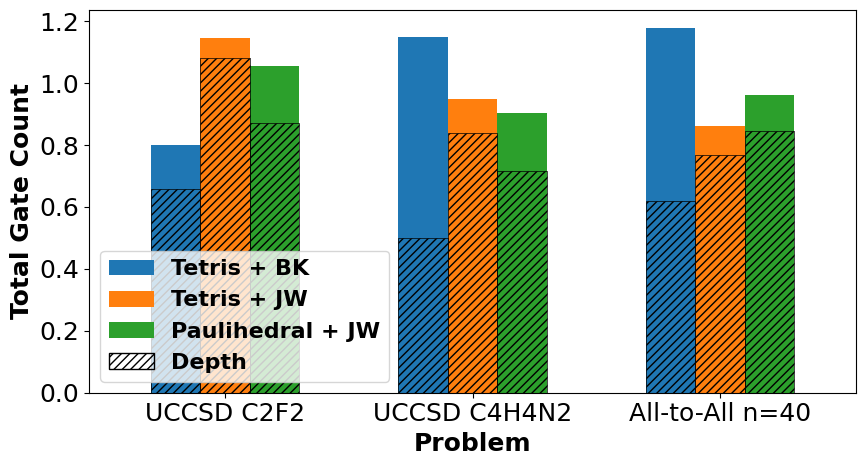}
    \caption{Total gate count for various UCCSD benchmarks on a linear
    architecture, normalized to the average across all compilation
    pipelines. No single combination of mapper and synthesizer is best
    across all benchmarks.}
    \label{fig:misc_comparison}
\end{figure}

\noindent\textbf{Takeaway.}
These observations together motivate a fundamentally different design
philosophy: rather than optimizing each stage in isolation against
intermediate metrics, one should fix a mapping that produces
structurally regular Pauli strings---namely JW---and co-design the
downstream synthesis and routing algorithms to exploit that regularity.
Our empirical results confirm that this approach achieves $O(N^4)$
circuit complexity, matching the information-theoretic lower bound for
full-rank electronic structure Hamiltonians, while all existing
approaches exhibit superquartic scaling in practice
(Appendix~\ref{appendix:scalability}).

\section{\sysname{} Implementation}
\label{approach}

This section presents \sysname{}'s end-to-end compilation approach for
electronic Hamiltonians, from fermionic operators to hardware circuits.
We fix JW mapping and develop specialized circuit synthesis algorithms
tailored to its output, targeting architectures with linear connectivity.
Our theoretical analysis focuses on all-to-all Hamiltonians, which
represent the full-rank case and provide a clean setting for complexity
analysis. We first describe Pauli string grouping
(\textsection\ref{sec:pauligroup}), then present intra-group
(\textsection\ref{sec:intra-atomic}) and inter-group
(\textsection\ref{sec:inter-atomic}) scheduling algorithms.

The electronic Hamiltonian in Equation~(\ref{eq:elec-hamiltonian})
consists of first excitation terms ($\sum_{p,q} h_{pq}\,a_p^\dagger
a_q$) and second excitation terms ($\frac{1}{2}\sum_{p,q,r,s}
h_{pqrs}\,a_p^\dagger a_q^\dagger a_r a_s$). The dominant compilation
cost comes from the second excitation terms, which under JW mapping
produce Pauli strings following the pattern $I^*AZ^*AI^*AZ^*AI^*$,
where $A \in \{X, Y\}$ and the total number of $X$ and $Y$ operators is
odd. This is made explicit in Equation~(\ref{eq:jw-pauli}), adapted from
Appendix A of \citet{uccsd}. The techniques developed below apply to
first excitation terms as well, and Section~\ref{sec:eval} confirms
their effectiveness on practical chemistry instances where not all
Hamiltonian terms are present.

\begin{align}
\label{eq:jw-pauli}
&\bigotimes_{b=l+1}^{k-1}Z_b
\bigotimes_{a=j+1}^{i-1}Z_a
\bigl(X_lX_jY_kX_i
+Y_lX_kY_jY_i
+X_lY_kY_jY_i \notag \\
+&X_lX_kX_jY_i
-Y_lX_kX_jX_i
-X_lY_kX_jX_i
-Y_lY_kY_jX_i
-Y_lY_kX_jY_i\bigr) \notag \\
=&\sum_{
\substack{
A_l,A_k,A_j,A_i\in\{X,Y\} \\
\text{\#X and \#Y both odd}
}}
\pm\,
I_1\cdots I_{l-1}
\otimes A_l\otimes
Z_{l+1}\cdots Z_{k-1}
\otimes A_k \notag \\
&\otimes
I_{k+1}\cdots I_{j-1}
\otimes A_j\otimes
Z_{j+1}\cdots Z_{i-1}
\otimes A_i\otimes
I_{i+1}\cdots I_n
\end{align}

\subsection{Pauli String Grouping}
\label{sec:pauligroup}

The JW mapping produces Pauli strings that share structural features
determined by the positions of their non-identity operators $A \in
\{X, Y\}$. We exploit this by partitioning strings into groups based on
which subset of the four $A$-operator positions they share, from most to
least constrained.

A \textbf{Singleton Group} contains a single Pauli string: all four
non-identity operators appear at a fixed set of positions. An
\textbf{Atomic Group} contains eight strings whose four $A$-operator
positions are all fixed but whose $X$/$Y$ assignments vary; these are
exactly the eight sign variants of one Pauli exponential term. A
\textbf{Mini Group} contains strings sharing the positions of the 1st,
3rd, and 4th $A$ operators, with the 2nd $A$ operator varying within a
contiguous range. A \textbf{Medium Group} contains strings sharing only
the positions of the 1st and 4th $A$ operators. A \textbf{Large Group}
contains all strings sharing the position of the 1st $A$ operator only.
Table~\ref{table:GroupDef} gives the formal definition, group counts,
and sizes for an $N$-qubit system.

\begin{table*}[t]
\renewcommand{\arraystretch}{1.5}
\caption{Pauli string group definitions, sizes, and examples for an
$N$-qubit system. $A \in \{X, Y\}$ denotes a non-identity operator.
$\text{pos}(\cdot)$ returns the qubit index of a given operator.
$\binom{m}{n}$ denotes the binomial coefficient.}
\label{table:GroupDef}
\resizebox{\linewidth}{!}{
\begin{tabular}{llccc}
\toprule
 & \textbf{Definition} & \textbf{\# groups}
 & \textbf{\# terms per group} & \textbf{Example} \\
\midrule
\textbf{Singleton} &
  All four $A$-operator positions fixed; single term. &
  $\binom{N}{4}\times 8$ & 1 &
  \texttt{\{XXIXY\}} \\
\textbf{Atomic} &
  All four $A$-operator positions fixed; $X/Y$ assignments vary. &
  $\binom{N}{4}$ & 8 &
  \texttt{\{XXIXY, XYIYY, \ldots\}} \\
\textbf{Mini} &
  Positions of 1st, 3rd, and 4th $A$ fixed; 2nd $A$ varies. &
  $\binom{N}{3}$ &
  $\text{pos}(\text{3rd }A) - \text{pos}(\text{1st }A) - 1$ &
  \texttt{\{AAIAA, AZAAA, \ldots\}} \\
\textbf{Medium} &
  Positions of 1st and 4th $A$ fixed; inner positions vary. &
  $\binom{N}{2}$ &
  $\binom{\text{pos}(\text{4th }A)-\text{pos}(\text{1st }A)-1}{2}$ &
  \texttt{\{AZAAA, AAAZA, \ldots\}} \\
\textbf{Large} &
  Position of 1st $A$ fixed only. &
  $\binom{N}{1}$ &
  $\binom{N - 2 - \text{pos}(\text{1st }A)}{3}$ &
  \texttt{\{AZAAA, AAAAI, \ldots\}} \\
\bottomrule
\end{tabular}}
\end{table*}

\subsection{Intra-Group Scheduling}
\label{sec:intra-atomic}

Singleton groups contain exactly one term and require no intra-group
scheduling. We focus on scheduling within Atomic and Mini groups; the
scheduling within Medium and Large groups follows the same logic as
inter-group scheduling for Mini groups.

\paragraph{Intra-Atomic-Group Scheduling.}
To execute a single Pauli string within an atomic group, we select one
$A$ operator as the root and construct a CNOT chain linking all relevant
qubits to this root. A naive approach follows the original qubit layout
(Figure~\ref{fig:GroupGate}(a)). Because the 2nd and 3rd $A$ operators
are often separated by a long stretch of identity operators, realizing
the required connectivity naively demands many SWAP gates. Since all
eight strings in an atomic group share the same positional pattern, this
SWAP overhead is incurred repeatedly.

To eliminate this overhead, we apply \emph{qubit remapping} before
executing each atomic group. We reorder the physical qubits so that
those corresponding to $A$ operators, $Z$ operators, and $I$ operators
are each placed in contiguous regions (Figure~\ref{fig:GroupGate}(b)).
Under this layout, the required CNOT connections can be realized without
any SWAP gates during string execution.

Remapping also enables gate cancellation across strings within the same
atomic group. When two strings are executed consecutively, the CNOT and
basis-change gates (H and S) acting on the shared portion of the
connectivity path from leaf to root are identical and cancel directly.
By ordering strings so that consecutive pairs share the longest possible
common suffix, we achieve $O(N)$ gate cancellation per transition within
an atomic group. For example, strings $XX\alpha$ and $YX\alpha$ (where
$\alpha$ denotes a shared suffix) can be scheduled adjacently to cancel
all gates acting on $\alpha$.

\begin{figure}[t]
    \centering
    \includegraphics[width=0.49\textwidth]{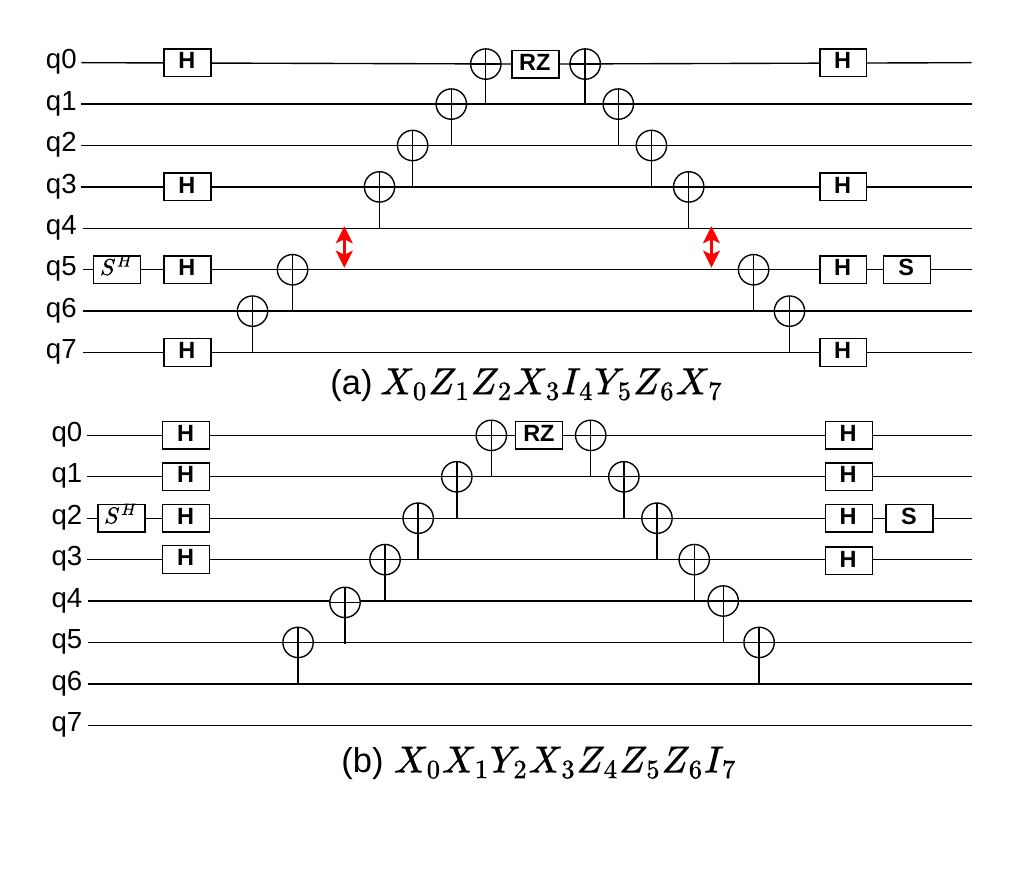}
    \vspace{-0.1in}
    \caption{(a) Naive qubit layout for an atomic group, requiring
    multiple SWAP gates to bridge non-adjacent $A$ operators. (b)
    \sysname{}'s remapped layout, clustering $A$, $Z$, and $I$ operators
    into contiguous regions, eliminating SWAP overhead.}
    \label{fig:GroupGate}
\end{figure}

\begin{figure}[t]
    \centering
    \includegraphics[width=0.49\textwidth]{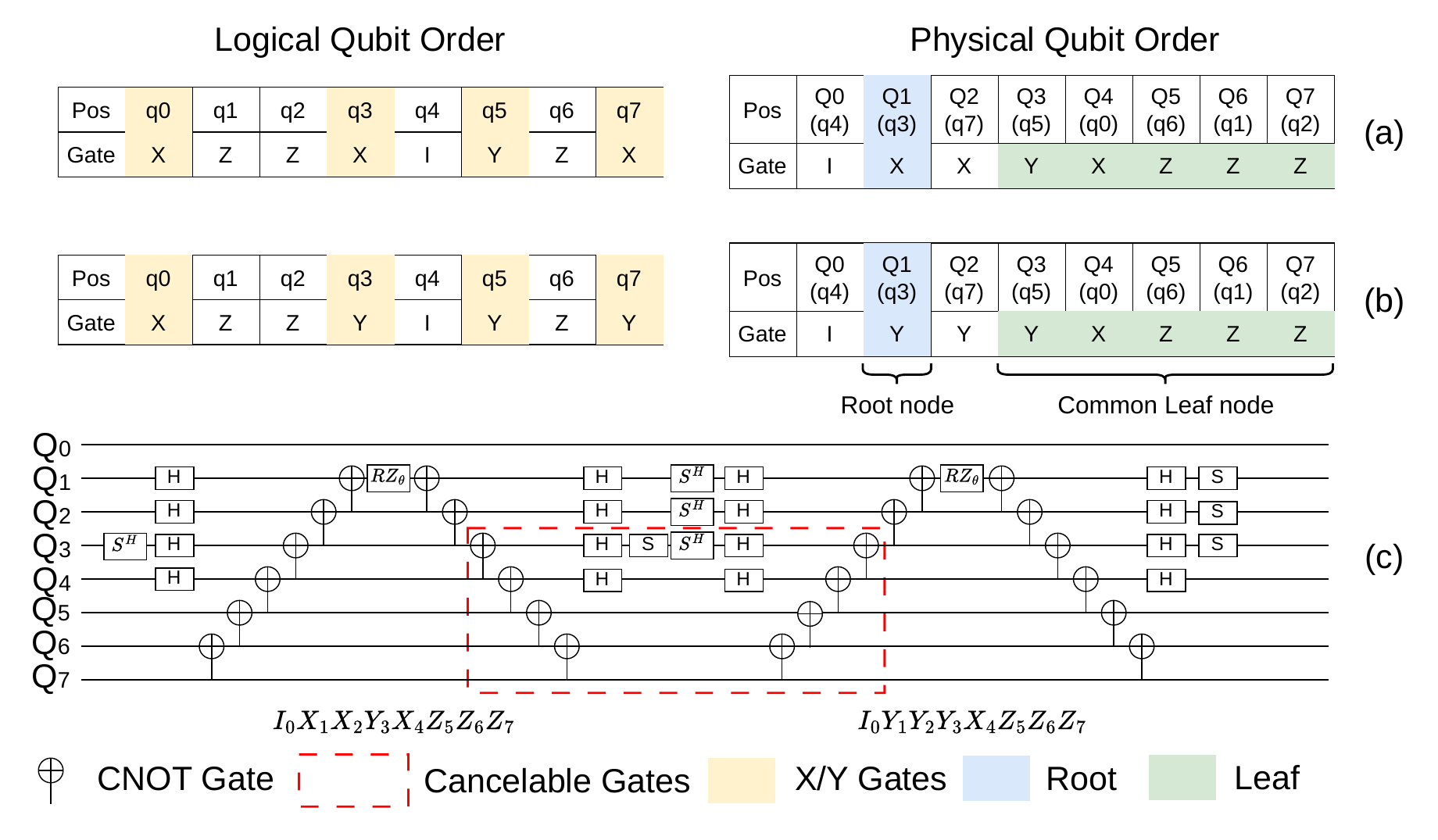}
    \vspace{-0.1in}
    \caption{Gate cancellation between two singleton groups (a) and (b)
    that share common leaf qubits. All gates on the shared path from
    leaf nodes to the root (Q1) cancel when the groups are scheduled
    consecutively.}
    \label{fig:OrderWithinAtomicGroup}
\end{figure}

\begin{figure*}[t]
    \centering
    \includegraphics[width=\textwidth]{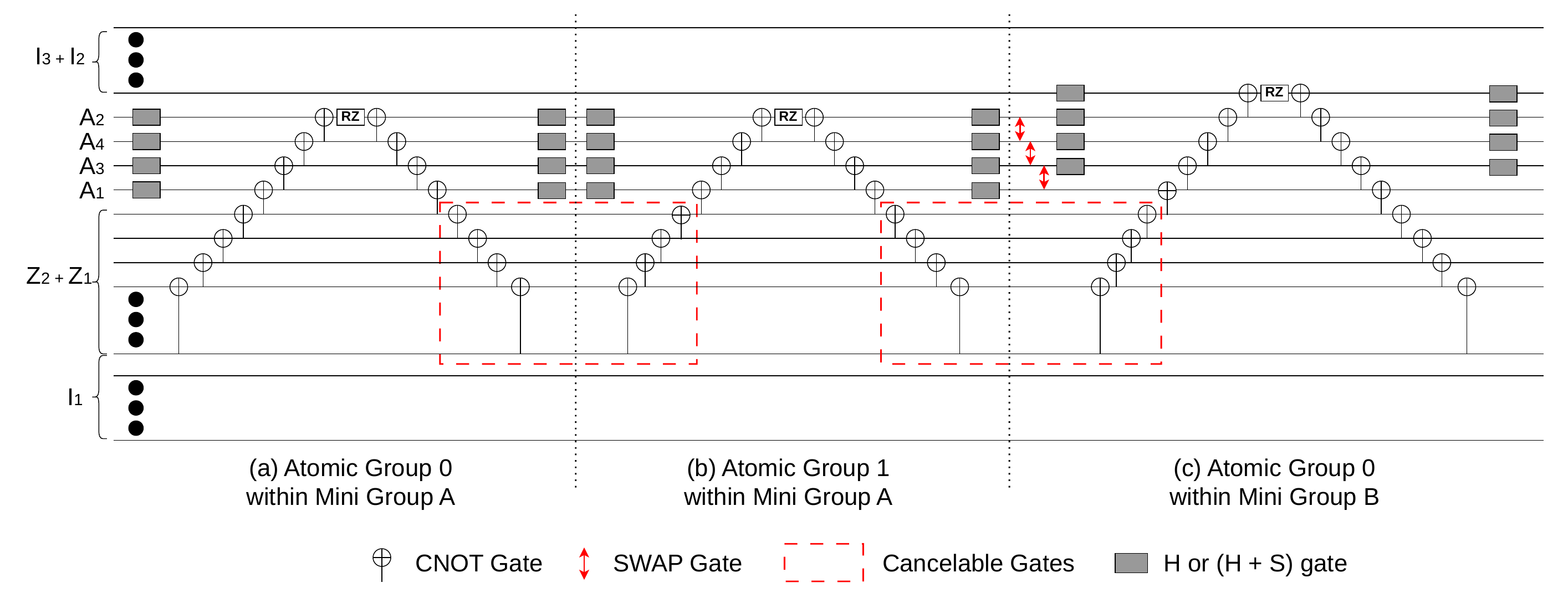}
    \vspace{-0.3in}
    \caption{Illustration of CNOT and SWAP gate cancellation across
    atomic groups within a mini group ((a)$\to$(b)) and across adjacent
    mini groups ((b)$\to$(c)).}
    \label{fig:GateCancel}
\end{figure*}

\paragraph{Intra-Mini-Group Scheduling.}
Within a mini group, all strings share the positions of the 1st, 3rd,
and 4th $A$ operators; only the position of the 2nd $A$ operator varies.
We define two mini groups as \emph{adjacent} if the positions of their
2nd $A$ operators differ by one.

Before executing a mini group, we apply a global qubit remapping so that
all strings conform to the canonical layout $I^*AAAAZ^*I^*$, in which
all $A$ operators are clustered contiguously, followed by $Z$ operators
and then identity operators. Under this layout, transitioning between
two adjacent mini groups requires shifting the 2nd $A$ operator by one
position, which can always be accomplished with exactly 3 SWAP
operations. This transformation preserves the $I^*AAAAZ^*I^*$ pattern,
keeping subsequent executions compatible with our intra-group strategy.

This canonical layout also enables CNOT gate cancellation across
adjacent mini groups. As illustrated in Figure~\ref{fig:GateCancel},
two cases arise. For atomic groups within the same mini group
(Figure~\ref{fig:GateCancel}(a)$\to$(b)), all strings share identical
positions, so the cancellation mechanism from intra-atomic scheduling
applies directly. For atomic groups across adjacent mini groups
(Figure~\ref{fig:GateCancel}(b)$\to$(c)), the 3 SWAP operations
required for the transition are confined to the qubits hosting the four
$A$ operators. The CNOT chains acting on the $Z$-operator qubits are
unaffected and can therefore be largely canceled across group
boundaries.

\subsection{Inter-Group Scheduling}
\label{sec:inter-atomic}

Our overall scheduling procedure iterates over all groups
hierarchically, executing mini groups in sequence according to
Algorithm~\ref{algo:all-to-all-synthesis}. Each group corresponds to a
specific qubit permutation of length $N$ determined by the first Pauli
string in the group. Transitioning between groups is equivalent to
transforming one permutation into another, which we accomplish using
parallel odd-even transposition sorting~\cite{habermann1972}. This
method realizes any permutation using only nearest-neighbor SWAP
operations in $O(N)$ depth, providing a uniform and deterministic
scheduling strategy for all inter-group transitions.

\begin{table}[t]
\caption{Logical qubit layout for executing a Pauli string with $A$
operators at positions $p$, $q$, $r$, and $s$.}
\label{table:PauliLogic}
\resizebox{\columnwidth}{!}{
\begin{tabular}{lccccccccc}
\toprule
\textbf{Region} & $I_1$ & $A_1$ & $Z_1$ & $A_2$ & $I_2$
               & $A_3$ & $Z_2$ & $A_4$ & $I_3$ \\
\midrule
Start & 0   & $p$   & $p+1$ & $q$   & $q+1$ & $r$   & $r+1$ & $s$   & $s+1$ \\
End   & $p-1$ & $p$ & $q-1$ & $q$   & $r-1$ & $r$   & $s-1$ & $s$   & $n-1$ \\
\bottomrule
\end{tabular}}
\end{table}

\begin{algorithm}[t]
\caption{Hierarchical scheduling for all-to-all Hamiltonians}
\label{algo:all-to-all-synthesis}
\begin{algorithmic}[1]
\For{each large group $L$}
    \State Initialize qubit mapping for $L$
    \For{each medium group $M \in L$}
        \State Adjust mapping to align the 4th $A$ operator
        \For{each mini group $m \in M$}
            \State Apply 3 SWAPs to achieve the $I^*AAAAZ^*I^*$ layout
            \For{each atomic group $a \in m$}
                \State Cancel adjacent Clifford gates
                \State Execute all Pauli strings in $a$
            \EndFor
        \EndFor
    \EndFor
\EndFor
\end{algorithmic}
\end{algorithm}

\subsection{Complexity Analysis}
\label{sec:complexity}

Algorithm~\ref{algo:all-to-all-synthesis} produces circuits with gate
count and circuit depth both bounded by $O(N^4)$. We provide a proof
sketch for circuit depth; the full gate count bound follows from a
charging argument detailed in Appendix~\ref{appendix:gate_count_bound}.

The total circuit depth has two components: intra-group and inter-group
scheduling. For intra-mini-group scheduling, atomic groups are executed
sequentially and transitioning between consecutive atomic groups
requires 3 additional SWAP stages. The gate cancellation enabled by our
scheduling strategy (Figure~\ref{fig:GateCancel}) ensures that each
atomic group contributes only $O(1)$ amortized depth, because the
$O(N)$ depth incurred by the first and last groups in each mini group is
amortized over $O(N)$ atomic groups. Since there are $O(N^4)$ atomic
groups in total, the total depth from intra-group scheduling is
$O(N^4)$.

For inter-group scheduling, the numbers of mini, medium, and large
groups are $\binom{N}{3}$, $\binom{N}{2}$, and $\binom{N}{1}$,
respectively (Table~\ref{table:GroupDef}). Each group transition
requires $O(N)$ depth via odd-even transposition sorting, contributing
$O(N^4)$ total depth across all transitions.

Summing both components gives $O(N^4)$ total circuit depth. Since the
number of second excitation terms grows as $\Theta(N^4)$, this bound is
asymptotically optimal: each Pauli string incurs only $O(1)$ amortized
depth. The overall circuit depth for the full electronic Hamiltonian---
including first excitation terms, whose count is dominated by the second
excitation terms---is therefore $O(N^4)$.

\subsection{Atomic Group Optimizations}

\begin{figure*}[t]
    \centering
\resizebox{\textwidth}{!}{
\begin{quantikz}[column sep=0.3cm, font=\huge,
  row sep={1cm,between origins}]
\lstick{$Q_{0}$} & \gate{S^\dagger} & \gate{H} & \qw & \qw & \qw & \qw & \targ{} & \gate{R_Z(\theta)} & \targ{} & \qw & \qw & \qw & \qw & \gate{H} & \gate{S} & \qw & \gate{H} & \qw & \qw & \qw & \qw & \targ{} & \gate{R_Z(\theta)} & \targ{} & \qw & \qw & \qw & \qw & \gate{H} & \qw & \qw\slice[style={dotted}]{} & \qw & \qw & \qw & \qw & \qw & \qw & \qw & \qw & \qw & \qw & \qw & \qw & \qw & \qw & \qw & \qw & \qw & \qw \\
\lstick{$Q_{1}$} & \qw & \gate{H} & \qw & \qw & \qw & \targ{} & \ctrl{-1} & \qw & \ctrl{-1} & \targ{} \gategroup[2,steps=1,style={draw=green!50!black, thick, rounded corners=2pt, inner sep=6pt}]{} & \qw & \qw & \qw & \gate{H} \gategroup[1,steps=4,style={draw=blue, thick, rounded corners=2pt, inner sep=6pt}]{} & \qw & \gate{S^\dagger} & \gate{H} & \qw & \qw & \qw & \targ{} \gategroup[2,steps=1,style={draw=green!50!black, thick, rounded corners=2pt, inner sep=6pt}]{} & \ctrl{-1} & \qw & \ctrl{-1} & \targ{} & \qw & \qw & \qw & \gate{H} \gategroup[3,steps=2,style={draw=yellow!50!black, thick, rounded corners=2pt, inner sep=6pt}]{} & \gate{S} & \qw & \qw & \qw & \swap{1} & \qw & \qw & \gate{H} & \qw & \qw & \qw & \targ{} & \gate{R_Z(\theta)} & \targ{} & \qw & \qw & \qw & \gate{H} & \qw & \qw \\
\lstick{$Q_{2}$} & \qw & \gate{H} & \qw & \qw & \targ{} & \ctrl{-1} & \qw & \qw & \qw & \ctrl{-1} & \targ{} \gategroup[4,steps=10,style={draw=red, thick, rounded corners=2pt, inner sep=6pt}]{} & \qw & \qw & \gate{H} & \qw & \qw & \gate{H} & \qw & \qw & \targ{} & \ctrl{-1} & \qw & \qw & \qw & \ctrl{-1} & \targ{} & \qw & \qw & \gate{H} & \qw & \qw & \qw & \swap{1} & \targX{} & \qw & \gate{S^\dagger} \gategroup[3,steps=2,style={draw=yellow!50!black, thick, rounded corners=2pt, inner sep=6pt}]{} & \gate{H} & \qw & \qw & \targ{} & \ctrl{-1} & \qw & \ctrl{-1} & \targ{} & \qw & \qw & \gate{H} & \gate{S} & \qw \\
\lstick{$Q_{3}$} & \qw & \gate{H} & \qw & \targ{} & \ctrl{-1} & \qw & \qw & \qw & \qw & \qw & \ctrl{-1} & \targ{} & \qw & \gate{H} & \qw & \qw & \gate{H} & \qw & \targ{} & \ctrl{-1} & \qw & \qw & \qw & \qw & \qw & \ctrl{-1} & \targ{} & \qw & \gate{H} & \qw & \qw & \swap{1} & \targX{} & \qw & \qw & \qw & \gate{H} & \qw & \targ{} & \ctrl{-1} & \qw & \qw & \qw & \ctrl{-1} & \targ{} & \qw & \gate{H} & \qw & \qw \\
\lstick{$Q_{4}$} & \qw & \qw & \targ{} & \ctrl{-1} & \qw & \qw & \qw & \qw & \qw & \qw & \qw & \ctrl{-1} & \targ{} & \qw & \qw & \qw & \qw & \targ{} & \ctrl{-1} & \qw & \qw & \qw & \qw & \qw & \qw & \qw & \ctrl{-1} & \targ{} & \qw & \qw & \qw & \targX{} & \qw & \qw & \qw & \qw & \gate{H} & \targ{} & \ctrl{-1} & \qw & \qw & \qw & \qw & \qw & \ctrl{-1} & \targ{} & \gate{H} & \qw & \qw \\
\lstick{$Q_{5}$} & \qw & \qw & \ctrl{-1} & \qw & \qw & \qw & \qw & \qw & \qw & \qw & \qw & \qw & \ctrl{-1} & \qw & \qw & \qw & \qw & \ctrl{-1} & \qw & \qw & \qw & \qw & \qw & \qw & \qw & \qw & \qw & \ctrl{-1} & \qw & \qw & \qw & \qw & \qw & \qw & \qw & \qw & \qw & \ctrl{-1} & \qw & \qw & \qw & \qw & \qw & \qw & \qw & \ctrl{-1} & \qw & \qw & \qw
\end{quantikz}
}
    \caption{Red: This portion of the circuit can be directly cancelled. Blue: It is known that $HSH=X^{1/2}$. Green: These CNOTs commute through $X$ on their target, thus they can also be cancelled. Yellow: These $H$, $S$, and $S^\dagger$ gates can be commuted through the $SWAP$ gates and cancelled.}
    \label{fig:OptCancel}
\end{figure*}

We can make additional adjustments to the order which strings within an atomic group are scheduled to cancel even more gates (Figure~\ref{fig:OptCancel}).

From the qubit order, the first and third factor of a Pauli string show up in the lower half of the basis rotations.
If they are the same Pauli, they can be cancelled (red rectangle in Figure~\ref{fig:OptCancel}).
Thus we pick the following order in (Eq~\ref{eq:string-ord}) to schedule strings in an atomic group.
This order maximizes the how often the first and third factor coincide (the underlined Paulis).
\begin{align}
\label{eq:string-ord}
\underline YX\underline YY,
\underline YY\underline YX,
\underline YX\underline XX,
\underline YY\underline XY,\nonumber\\
\underline XY\underline XX,
\underline XX\underline XY,
\underline XX\underline YX,
\underline XY\underline YY
\end{align}

This can be further improved by observing $HSH=X^{1/2}$ and $HS^\dagger H=X^{-1/2}$ (blue rectangle in Figure~\ref{fig:OptCancel}).
A CNOT applies $I$ or $X$ to the target qubit based on the control qubit's value, thus CNOT commutes with $X^{1/2}$ on the target qubit.
This allows us to cancel an additional pair of CNOTs per string (green rectangle in Figure~\ref{fig:OptCancel}).

Finally, when switching between atomic groups, we note that $A_1$, $A_3$, and $A_4$ remain on the same qubit.
Thus we can avoid doing any basis transformations here if we reverse the order which we schedule strings in an atomic group each iteration.
Equivalently, we can see this as commuting $H$, $S$, and $S^\dagger$ gates through the $SWAP$ gates and cancelling them (yellow rectangle in Figure~\ref{fig:OptCancel}).

\subsection{Depth Optimizations}
Observe that a Pauli string with index $(p,q,r,s)$ only interacts on qubits $p$ through $s$.
Thus if we have two Pauli strings with indices $(p,q,r,s)$ and $(p', q',r',s')$ such that $s<p'$, they operate on disjoint sets of qubits.
This means they can be run in parallel.

To make this precise, we construct a gadget $G(p,s)$ for each medium group (1st and 4th A are fixed).
This gadget has a pre-condition and post-condition that the supported qubits $p$ through $s$ are not permuted, and in the $Z$ basis.
The gadget preserves this condition by scheduling all the strings in the mini and atomic groups, before applying a permutation to return the qubits to their original order.
Note that $G(p,s)$ and $G(p',s')$ can be run in parallel, so long as $s<p'$.
All that remains is to find a suitable order to execute all the gadgets $G(p,s)$ for $0\le p<s\le n-1$.

\sysname{} implements a greedy scheduler based off of the circuit depth and supported qubits of each gadget.
For the current time step, we record which gadgets are running on each qubit.
Then we iteratively schedule the gadget which has the largest circuit depth, but can be run in parallel with the currently running gadgets.
If no candidates exist, advance the current time step until a candidate is found.
An example of a resulting schedule is shown in Figure~\ref{fig:GadgetScheduler}

\begin{figure}
    \centering
    \includegraphics[width=0.5\textwidth]{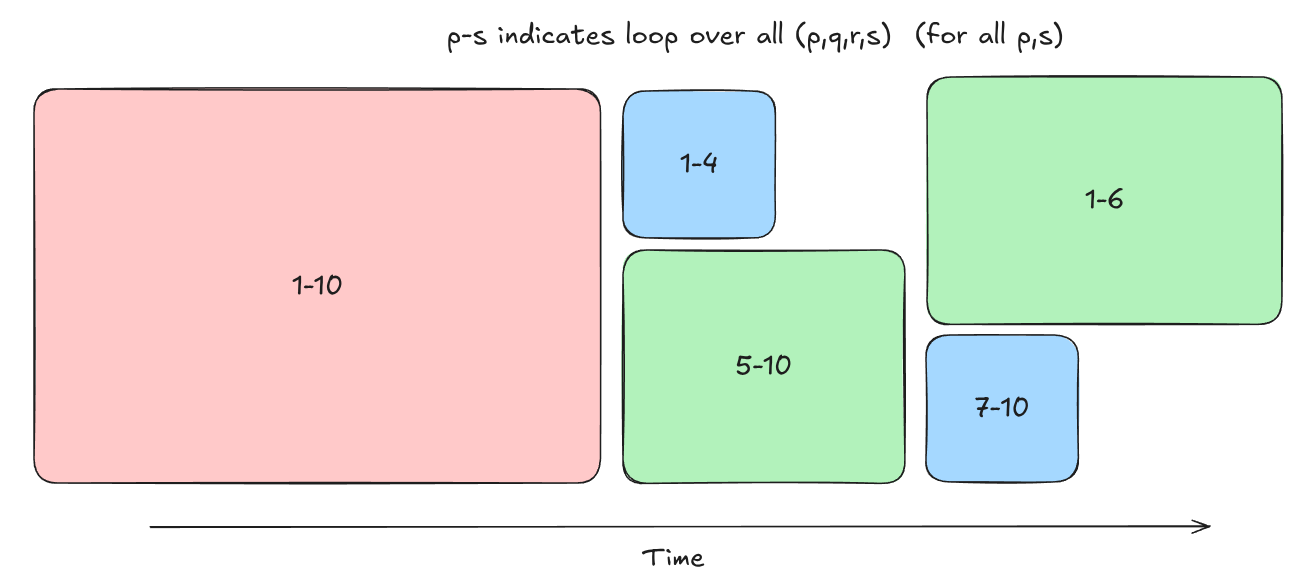}
    \caption{Each rectangle represents operations for a medium group.
    Medium groups are opportunistically scheduled, with the longest
    task being prioritized whenever free space is available.}
    \label{fig:GadgetScheduler}
\end{figure}

\section{Evaluation}
\label{sec:eval}

We evaluate \sysname{} by answering the following questions.
\begin{enumerate}[label=Q\arabic*)]
    \item How effectively does \sysname{} reduce circuit depth and gate
    count?
    \item How well do these metrics scale with problem size?
    \item How efficient is \sysname{}'s compilation time?
\end{enumerate}

\subsection{Experiment Setup}
\label{sec:experiment_setup}

\paragraph{Benchmarks.}
We evaluate on two benchmark classes. First, we use the UCCSD
ansatz~\cite{uccsd} instantiated on 15 distinct molecules retrieved from
PubChem~\cite{sunghwan2024pubchem} and constructed using
PySCF~\cite{PySCF}. Table~\ref{tab:uccsd} lists the molecules and
their ansatz properties. Second, we generate full-rank all-to-all
electronic Hamiltonians for $4 \le N \le 40$ qubits to evaluate
scalability when every possible Pauli string is included.

\begin{table}[htbp]
\caption{UCCSD ansatz properties by molecule.}
\label{tab:uccsd}
\begin{center}
\resizebox{\columnwidth}{!}{
\begin{tabular}{|c|c|c|c|c|c|}
\hline
\textbf{Molecule} & \textbf{Orbitals} & \textbf{Particles}
& \textbf{Qubits} & \textbf{\# 1st Exc.} & \textbf{\# 2nd Exc.} \\
\hline
LiH             & 6  & (2,2)   & 12 & 32   & 152     \\
BeH$_2$         & 7  & (3,3)   & 14 & 48   & 360     \\
CH$_4$          & 9  & (5,5)   & 18 & 80   & 1040    \\
MgH$_2$         & 11 & (7,7)   & 22 & 112  & 2072    \\
SiH$_4$         & 13 & (9,9)   & 26 & 144  & 3456    \\
CO$_2$          & 15 & (11,11) & 30 & 176  & 5192    \\
CH$_3$Cl        & 17 & (13,13) & 34 & 208  & 7280    \\
C$_2$F$_2$      & 20 & (15,15) & 40 & 300  & 15450   \\
H$_4$Si$_2$     & 22 & (16,16) & 44 & 384  & 25632   \\
C$_2$H$_4$F$_2$ & 24 & (17,17) & 48 & 476  & 39746   \\
CH$_3$ClS       & 26 & (21,21) & 52 & 420  & 30450   \\
COCl$_2$        & 28 & (24,24) & 56 & 384  & 25056   \\
C$_4$N$_2$      & 30 & (19,19) & 60 & 836  & 124982  \\
C$_3$H$_7$NO    & 32 & (20,20) & 64 & 960  & 165360  \\
C$_4$H$_4$N$_2$ & 34 & (21,21) & 68 & 1092 & 214578  \\
\hline
\end{tabular}}
\end{center}
\end{table}

\paragraph{Hardware Architectures.}
We evaluate on three modern quantum computing architectures: IBM
heavy-hex (Boston), square grid (Miami), and a linear architecture.
These represent a spectrum of connectivity patterns in current
superconducting hardware, spanning both NISQ and fault-tolerant regimes.
For Boston and Miami, \sysname{} first extracts a linear chain spanning
most or all qubits (Appendix~\ref{appendix:linear-pattern}) and applies
its compilation along this chain. The results therefore represent a
lower bound on \sysname{}'s full potential, since additional
connectivity beyond the extracted chain is not exploited.

\paragraph{Metrics.}
We measure total gate count, circuit depth, and compilation time. Gate
count and depth quantify hardware resource cost and critical path
length, respectively; SWAP gates are decomposed into 3 CNOT gates before
recording these metrics. Compilation time measures the wall-clock
duration to compile a set of Pauli strings to a hardware circuit,
excluding Hamiltonian construction and postprocessing shared by all
approaches. These metrics are standard in quantum circuit compilation
research~\cite{linearqft, li+:asplos19, molavi+:micro22,
zhang+:asplos21}.

\paragraph{Baselines.}
We compare against four baseline combinations: Paulihedral+JW,
Paulihedral+BK, Tetris+JW, and Tetris+BK. JW and BK represent
high and low Pauli weight mappings, respectively. HATT and Fermihedral
are excluded because they do not scale to the larger benchmarks and
because their advantage---reduced Pauli weight---is orthogonal to the
structural regularity that \sysname{} exploits. Tetris and Paulihedral
are selected as representative state-of-the-art hardware compilers for
VQE workloads. A uniform postprocessing pass---self-inverse gate
cancellation, single-qubit gate merging, and SWAP decomposition into
CNOTs---is applied to all compilers.

\paragraph{Hardware Setup.}
All experiments were run on an HPC cluster with Intel Xeon Platinum 8358
(2.60~GHz) and Xeon Gold 6230R (2.10~GHz) processors, running Red Hat
Enterprise Linux 9 and CentOS Linux 7. Tetris's default lookahead
setting causes runtimes to exceed the cluster's 72-hour timeout for the
three largest UCCSD molecules and all all-to-all Hamiltonians; lookahead
is therefore disabled for those benchmarks.

\subsection{Q1: Gate Count and Circuit Depth Reduction}
\label{sec:q1}

Since no single baseline is best across all benchmarks
(Figure~\ref{fig:misc_comparison}), we report \sysname{}'s improvement
relative to the best-performing baseline for each individual benchmark.

\paragraph{Total Gate Count.}
CNOT gates dominate total gate count and are \sysname{}'s primary
optimization target. \sysname{} reduces CNOT count by $62.73\%$ on
Linear, $59.57\%$ on Boston, and $36.73\%$ on Miami
(Figure~\ref{fig:total}). As connectivity increases, the advantage
decreases because \sysname{}'s linear-chain strategy does not utilize
additional edges beyond the extracted path. These CNOT reductions
translate to total gate count reductions of $56.39\%$ (Linear),
$52.28\%$ (Boston), and $26.22\%$ (Miami). \sysname{} consistently
outperforms all baselines on all three architectures.

\sysname{} uses more SWAP gates than the baselines
(Table~\ref{tab:gate_count}). This is intentional: \sysname{} applies
qubit reordering via SWAPs before executing each group to establish a
favorable qubit layout, which enables substantially more CNOT
cancellation in subsequent steps. The net effect is a lower overall
CNOT count despite the higher SWAP usage, making this tradeoff
beneficial.

\begin{figure*}[t]
\centerline{\includegraphics[width=\textwidth]{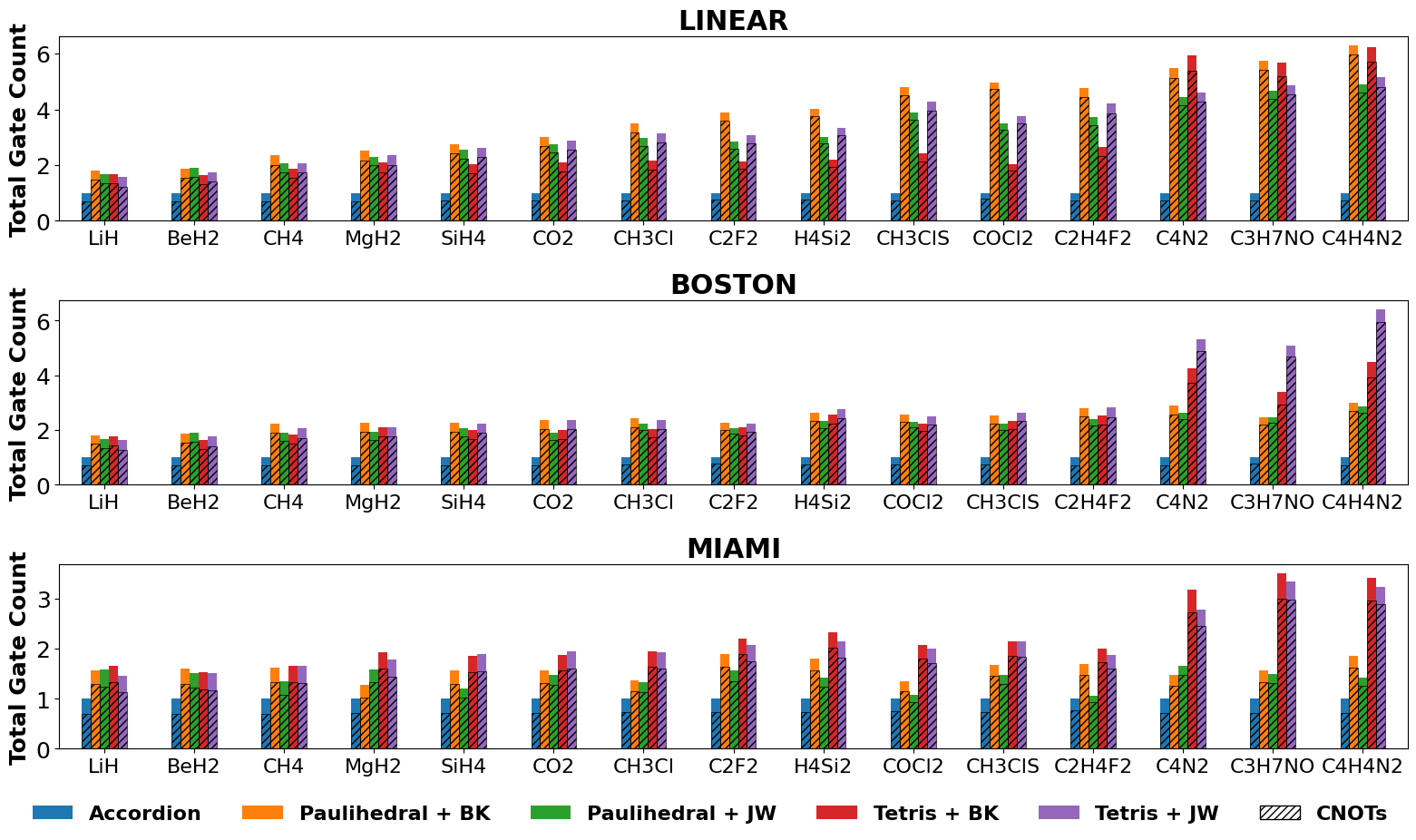}}
\vspace{-0.1in}
\caption{Total gate count on different molecules and architectures,
normalized to \sysname{} (= 1.0). CNOT gate count is overlaid; a full
breakdown is in Appendix~\ref{appendix:figures}.}
\label{fig:total}
\end{figure*}

\begin{figure*}[t]
\centerline{\includegraphics[width=\textwidth]{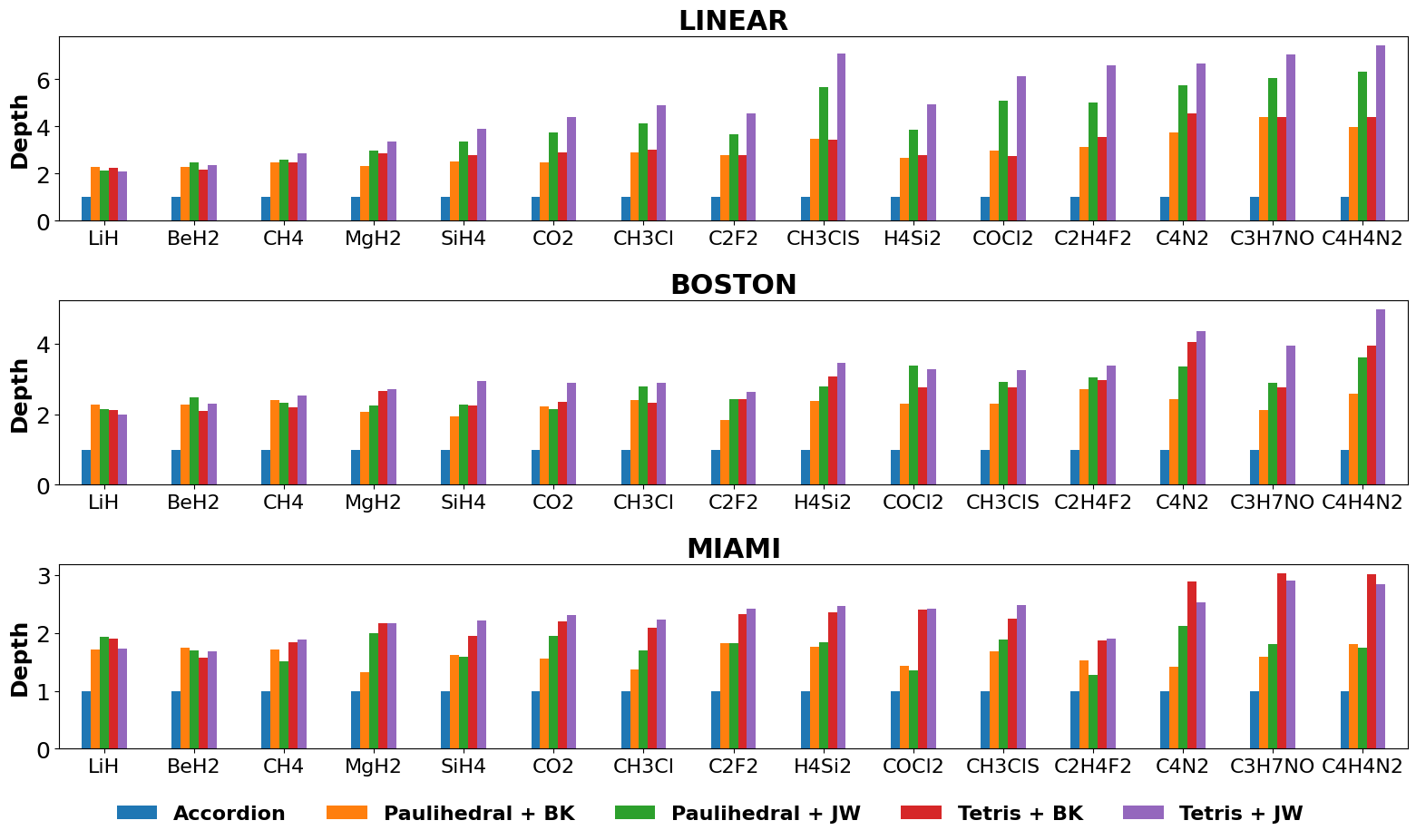}}
\vspace{-0.1in}
\caption{Circuit depth on different molecules and architectures,
normalized to \sysname{} (= 1.0).}
\label{fig:depth}
\end{figure*}

\begin{figure}[t]
    \centering
    \includegraphics[width=0.49\textwidth]{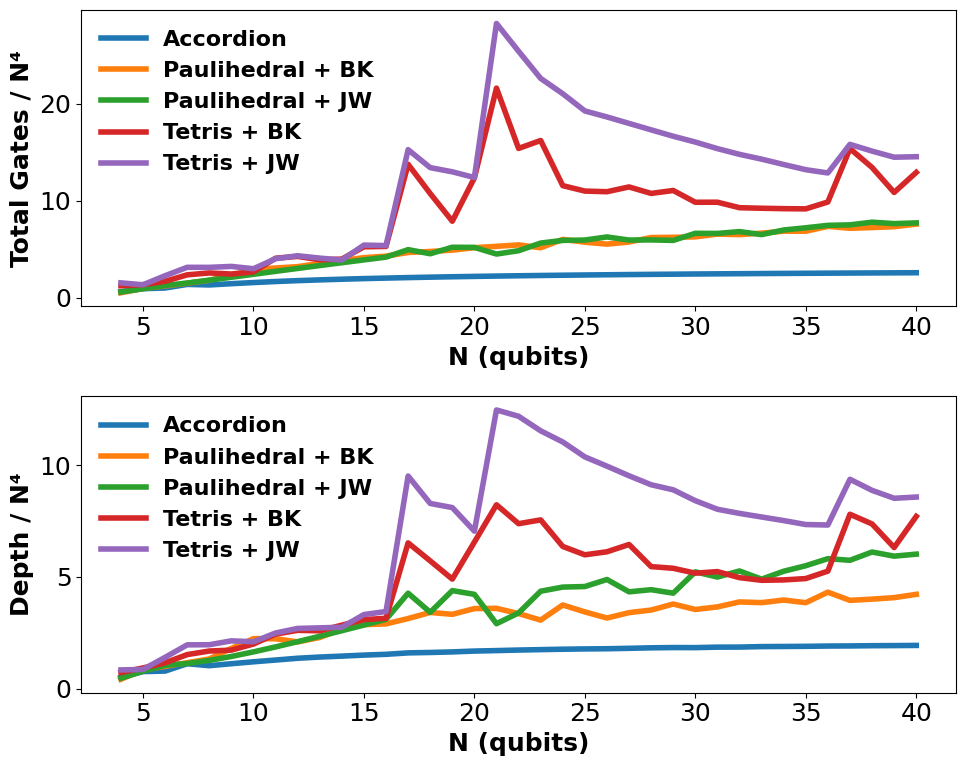}
    \vspace{-0.1in}
    \caption{Total gate count and circuit depth normalized by $N^4$ for
    all-to-all Hamiltonians on the Boston architecture.
    \sysname{} converges to a constant, confirming $O(N^4)$ complexity,
    while all baselines grow superquartically.}
    \label{fig:complete_n4}
\end{figure}

\paragraph{Single-Qubit Gates.}
\sysname{} introduces more single-qubit Clifford gates than the baselines
(Table~\ref{tab:gate_count}). This reflects a deliberate tradeoff:
Tetris and Paulihedral freely reorder strings across excitation terms to
maximize single-qubit gate cancellation, whereas \sysname{} fixes the
execution order within groups to maximize CNOT cancellation. In
fault-tolerant quantum computing, recent advances in magic state
cultivation~\cite{chen2026efficient, gidney2024magic} show that the cost
of non-Clifford gates (T gates) is comparable to that of two-qubit
Clifford gates (CNOTs). Since \sysname{} leaves the count of non-Clifford
gates (RZ gates) unchanged while substantially reducing CNOT count, it
provides the most meaningful savings in terms of overall fault-tolerant
resource cost. Furthermore, as Figure~\ref{fig:total} shows,
single-qubit Clifford gates constitute a small fraction of total gate
count in all cases.

\paragraph{Circuit Depth.}
\sysname{} reduces circuit depth by $64.14\%$ on Linear, $54.63\%$ on
Boston, and $34.82\%$ on Miami (Figure~\ref{fig:depth}). \sysname{}
consistenly outperforms all baselines on all architectures, but the
performance gap relative to baselines shrinks as the molecular benchmark
size increases. As with gate count, the advantage decreases with increasing
hardware connectivity, as the linear-chain abstraction does not exploit the
additional edges available in the Miami topology. Exploiting richer
hardware connectivity is a direction for future work.

\begin{table}[tbp]
\caption{Gate count breakdown for all compiler combinations on
C$_4$H$_4$N$_2$ (68 qubits) across hardware topologies. \sysname{}
produces identical circuits regardless of architecture, since it treats
all topologies as a linear chain.}
\label{tab:gate_count}
\centering
\resizebox{\columnwidth}{!}{
\begin{tabular}{|l|l|r|r|r|}
\hline
\textbf{Arch} & \textbf{Compiler} & \textbf{CNOT} & \textbf{Clifford 1Q} & \textbf{RZ} \\
\hline
\multirow{5}{*}{Linear} & \textsc{Tet}+JW & 42,603,032 & 2,261,252 & 859,404 \\
 & \textsc{Tet}+BK & 50,784,442 & 3,749,976 & 859,404 \\
 & \textsc{PH}+JW & 40,875,374 & 1,759,089 & 859,404 \\
 & \textsc{PH}+BK & 52,964,865 & 2,134,139 & 859,404 \\
 & \textsc{Accordion} & 6,399,354 & 1,633,367 & 859,404 \\
\hline
\multirow{5}{*}{Boston} & \textsc{Tet}+JW & 52,825,960 & 3,213,168 & 859,404 \\
 & \textsc{Tet}+BK & 34,749,240 & 4,217,391 & 859,404 \\
 & \textsc{PH}+JW & 23,311,396 & 1,285,155 & 859,404 \\
 & \textsc{PH}+BK & 23,847,989 & 2,002,719 & 859,404 \\
 & \textsc{Accordion} & 6,399,354 & 1,633,367 & 859,404 \\
\hline
\multirow{5}{*}{Miami} & \textsc{Tet}+JW & 25,663,888 & 2,168,896 & 859,404 \\
 & \textsc{Tet}+BK & 26,265,837 & 3,321,044 & 859,404 \\
 & \textsc{PH}+JW & 11,152,927 & 613,090 & 859,404 \\
 & \textsc{PH}+BK & 14,468,396 & 1,240,679 & 859,404 \\
 & \textsc{Accordion} & 6,399,354 & 1,633,367 & 859,404 \\
\hline
\end{tabular}
}
\end{table}

\begin{mdframed}[linewidth=0.8pt, innerleftmargin=6pt, innerrightmargin=6pt]
\textbf{Q1 Summary.} \sysname{} reduces both gate count and circuit
depth relative to the best baseline across all tested architectures.
The benefit is most pronounced on architectures with limited
connectivity, such as Linear and Boston.
\end{mdframed}

\subsection{Q2: Scalability}
\label{sec:q2}

We evaluate scalability using all-to-all Hamiltonians over a range of
qubit counts, since the number of second excitation terms grows
predictably as $\binom{N}{4} = O(N^4)$. Results are reported on Boston
as a representative real hardware architecture.

For any approach, at least $\binom{N}{4}$ RZ gates are required---one
per second excitation term---establishing $\Omega(N^4)$ as a lower
bound on gate count. Figure~\ref{fig:complete_n4} plots total gate count
and circuit depth normalized by $N^4$. For \sysname{}, both metrics
converge to a constant as $N$ grows, confirming $O(N^4)$ empirical
scaling consistent with the theoretical bound. All baseline approaches
exhibit superquartic growth, consistent with Figure~\ref{fig:ScalabilityA2A}.
While \sysname{} is not always optimal for small $N$, its advantage
grows monotonically with problem size---an important property given the
long-term trajectory toward larger quantum systems.

\begin{mdframed}[linewidth=0.8pt, innerleftmargin=6pt, innerrightmargin=6pt]
\textbf{Q2 Summary.} \sysname{} achieves $O(N^4)$ empirical scaling in
both gate count and circuit depth, matching the theoretical lower bound.
All baseline approaches scale superquartically.
\end{mdframed}

\subsection{Q3: Compilation Time}
\label{sec:q3}

Table~\ref{tab:runtime_comparison} reports compilation time for a
representative sample of benchmarks. \sysname{} completes all benchmarks
within 523 seconds and remains comparable with most benchmarks.
The only cases where \sysname{} trails behind are when compared against
Paulihedral+BK, or Tetris+BK.
Overall, \sysname{} suffers an average $15.89\%$ increased runtime, with
the worst case being $87.93\%$ increased runtime for BeH2.

A benefit of \sysname{}'s deterministic algorithm is an
architecture-independent compilation time. As shown in
Table~\ref{tab:runtime_comparison}, heuristic-based compilers such as
Paulihedral and Tetris can exhibit large runtime variation across
architectures for the same problem (e.g., Paulihedral+JW takes 3643
seconds on Miami versus 1516 seconds on Boston for C$_4$H$_4$N$_2$).
\sysname{}'s runtime is stable across architectures because it treats
all of them uniformly as a linear chain.

\begin{table*}[htbp]
\centering
\caption{Compiler runtime in seconds. Improv.\ = (best baseline $-$
\sysname{}) / best baseline. Entries marked $^*$ use Tetris with
lookahead disabled.}
\label{tab:runtime_comparison}
\vspace{-0.1in}
\resizebox{2\columnwidth}{!}{
\begin{tabular}{|l|l|r|rr|rr|rr|rr|}
\hline
\multirow{2}{*}{\textbf{Arch}} &\multirow{2}{*}{\textbf{Problem}} &\multicolumn{9}{c|}{\textbf{Runtime (s)}} \\
\cline{3-11}
 & & \textbf{Accordion} & \multicolumn{2}{c|}{\textbf{Paulihedral+BK}} & \multicolumn{2}{c|}{\textbf{Paulihedral+JW}} & \multicolumn{2}{c|}{\textbf{Tetris+BK}} & \multicolumn{2}{c|}{\textbf{Tetris+JW}} \\
\cline{4-5}\cline{6-7}\cline{8-9}\cline{10-11}
 & & & \textbf{Time (s)} & \textbf{Improv.} & \textbf{Time (s)} & \textbf{Improv.} & \textbf{Time (s)} & \textbf{Improv.} & \textbf{Time (s)} & \textbf{Improv.} \\
\hline
\multirow{5}{*}{MIAMI} & UCCSD MgH2 & 3.1 & 4.4 & $-$28.2\% & 3.8 & $-$17.8\% & 63.6 & $-$95.1\% & 62.7 & $-$95.0\% \\
 & UCCSD C2F2 & 28.9 & 29.4 & $-$1.8\% & 74.3 & $-$61.2\% & 6590.1 & $-$99.6\% & 4241.3 & $-$99.3\% \\
 & UCCSD C2H4F2 & 73.5 & 73.4 & $+$0.1\% & 251.6 & $-$70.8\% & 30359.5 & $-$99.8\% & 31434.2 & $-$99.8\% \\
 & Full Rank $N=40$ & 207.9 & 280.0 & $-$25.8\% & 772.3 & $-$73.1\% & 249.7 & $-$16.7\% & 300.6 & $-$30.8\% \\
 & UCCSD C4H4N2 & 523.1 & 446.1 & $+$17.3\% & 3643.6 & $-$85.6\% & 353.4 & $+$48.0\% & 709.6 & $-$26.3\% \\
\hline
\multirow{5}{*}{BOSTON} & UCCSD MgH2 & 3.1 & 4.6 & $-$31.6\% & 5.1 & $-$38.5\% & 64.2 & $-$95.1\% & 63.2 & $-$95.0\% \\
 & UCCSD C2F2 & 26.7 & 23.7 & $+$12.8\% & 64.8 & $-$58.7\% & 6543.5 & $-$99.6\% & 4242.3 & $-$99.4\% \\
 & UCCSD C2H4F2 & 79.8 & 59.8 & $+$33.3\% & 142.5 & $-$44.0\% & 30234.3 & $-$99.7\% & 31531.4 & $-$99.7\% \\
 & Full Rank $N=40$ & 209.0 & 229.4 & $-$8.9\% & 472.1 & $-$55.7\% & 256.9 & $-$18.6\% & 311.3 & $-$32.9\% \\
 & UCCSD C4H4N2 & 517.4 & 394.9 & $+$31.0\% & 1516.3 & $-$65.9\% & 400.3 & $+$29.3\% & 588.2 & $-$12.0\% \\
\hline
\multirow{5}{*}{LINEAR} & UCCSD MgH2 & 3.1 & 3.1 & $+$0.9\% & 3.6 & $-$13.7\% & 62.1 & $-$95.0\% & 61.1 & $-$94.9\% \\
 & UCCSD C2F2 & 26.4 & 21.7 & $+$21.8\% & 43.8 & $-$39.7\% & 4471.7 & $-$99.4\% & 4203.0 & $-$99.4\% \\
 & UCCSD C2H4F2 & 80.9 & 61.5 & $+$31.4\% & 137.6 & $-$41.2\% & 30234.9 & $-$99.7\% & 31509.0 & $-$99.7\% \\
 & Full Rank $N=40$ & 210.2 & 239.0 & $-$12.0\% & 463.4 & $-$54.6\% & 255.8 & $-$17.8\% & 280.3 & $-$25.0\% \\
 & UCCSD C4H4N2 & 518.4 & 584.5 & $-$11.3\% & 2243.0 & $-$76.9\% & 587.6 & $-$11.8\% & 801.2 & $-$35.3\% \\
\hline
\multicolumn{11}{l}{Improv.\ $= (\text{competitor} - \text{Accordion})/\text{competitor}$. Negative = Accordion is faster.} \\
\end{tabular}
}
\end{table*}

\begin{mdframed}[linewidth=0.8pt, innerleftmargin=6pt, innerrightmargin=6pt]
\textbf{Q3 Summary.} \sysname{} compiles all benchmarks within 523
seconds, remaining competitive compared to other compilers and
maintaining consistent runtime across hardware architectures.
\end{mdframed}

\section{Related Work}
\label{sec:related}

\paragraph{Multi-stage circuit generation for fermionic Hamiltonians.}
Prior work divides the circuit generation process into multiple
sequential stages. Fermion-to-qubit mappings---including
JW~\cite{JW}, BK~\cite{BK}, HATT~\cite{hatt}, and
Fermihedral~\cite{Fermihedral}---produce Pauli strings with varying
Pauli weights and structural properties. The Bonsai
algorithm~\cite{bonsai} generates tailored fermion-to-qubit mappings
derived from ternary trees. Given the resulting Pauli strings, circuit
synthesizers such as Tetris~\cite{Tetris} and
Paulihedral~\cite{Paulihedral} compile them into hardware circuits using
heuristic optimization. In contrast to this stage-wise design,
\sysname{} unifies the entire process by fixing the JW mapping and
developing tailored circuit synthesis algorithms that exploit JW's
structural regularity to reduce gate count, circuit depth, and
compilation time.

\paragraph{General-purpose quantum compiler optimizations.}
State-of-the-art quantum compilers such as Qiskit~\cite{qiskit},
Cirq~\cite{cirq}, and Braket~\cite{braket} incorporate optimization
passes including gate cancellation~\cite{nam2018automated} and qubit
routing~\cite{pozzi2022using}. These passes operate on arbitrary
circuits and can be applied to the output of any compilation pipeline,
including \sysname{}'s. The optimizations in \sysname{} are
complementary to these general-purpose compiler passes.

\section{Conclusion}
\label{conclusion}

We presented \sysname{}, an end-to-end framework for compiling
electronic structure fermionic Hamiltonians to hardware quantum circuits.
The central insight of \sysname{} is that the Jordan--Wigner mapping,
despite having the highest Pauli weight among standard mappings, produces
Pauli operators with a structural regularity that enables a class of
gate cancellation and qubit scheduling optimizations inaccessible to
general-purpose synthesizers. By co-designing the fermion-to-qubit
mapping with the with the circuit synthesis and hardware routing stages,
\sysname{} unlocks global optimization opportunities that stage-isolated
approaches cannot achieve.

Our key theoretical contribution is a proof that \sysname{} compiles
full-rank all-to-all electronic structure Hamiltonians into circuits with
$O(N^4)$ gate count and circuit depth, matching the information-theoretic
lower bound imposed by the $\Theta(N^4)$ number of second excitation
terms. This bound is achieved by a hierarchical Pauli string grouping
strategy that exploits JW's regular operator structure to maximize CNOT
cancellation across groups, combined with a qubit scheduling algorithm
that limits inter-group transition overhead to $O(1)$ amortized depth
per Pauli string. All existing mapper--synthesizer combinations exhibit
superquartic scaling in practice, making \sysname{} the first approach
to achieve asymptotically optimal circuit complexity for this problem
class.

Empirically, \sysname{} reduces gate count by up to $79.56\%$
(avg.\ $44.96\%$), circuit depth by up to $77.24\%$ (avg.\ $51.20\%$),
while maintaining competitive compilation time relative to
the best available baseline, across a suite of 15 molecular UCCSD
benchmarks on IBM heavy-hex, square grid, and linear architectures.
These improvements grow with problem size, a property that will become
increasingly important as quantum hardware scales toward the regime where
quantum chemistry simulations of practical interest become tractable.

More broadly, \sysname{} demonstrates that the conventional wisdom of
minimizing Pauli weight as a proxy for circuit quality is fundamentally
misguided. End-to-end circuit efficiency depends not only on the
compactness of intermediate representations but on the structural
properties that downstream compilation stages can exploit. We anticipate
that this co-design philosophy---fixing a structured intermediate
representation and specializing all downstream compilation to it---will
generalize beyond electronic structure Hamiltonians to other classes of
quantum simulation workloads.

Future work includes extending \sysname{} to exploit richer hardware
connectivity beyond the linear-chain abstraction, adapting the grouping
and scheduling algorithms to Fermi--Hubbard and other lattice
Hamiltonians, and integrating \sysname{}'s structured compilation with
fault-tolerant resource estimation frameworks.

\bibliographystyle{ACM-Reference-Format}
\bibliography{sample-base}

\balance
\appendix

\section{Linear pattern across topologies}
\label{appendix:linear-pattern}

\emph{Linear connectivity patterns} 
are prevalent across modern quantum hardware architectures. 
Despite differences in physical layouts, such as heavy-hex, linear chains, and 2D grids, 
we can always identify a \emph{logical line} that passes through most, if not all, qubits.
Figure~\ref{fig:linearCommonPattern} illustrates how such linear patterns can be extracted 
from various hardware topologies.

\begin{figure}[h]
    \centering
    \includegraphics[width=0.5\textwidth]{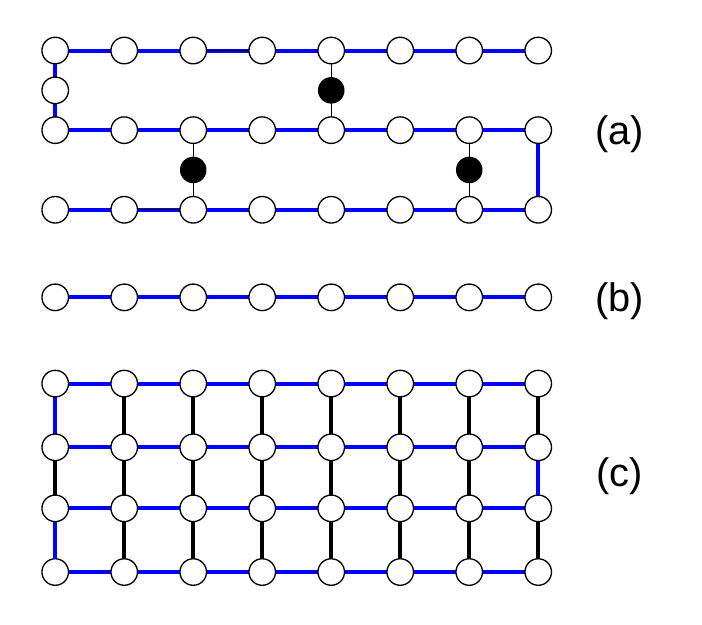}
    \vspace{-0.1in}
    \caption{We can always find a line (\textcolor{blue}{---}) passing most of the qubits in modern quantum hardware architectures, where (a) is Boston, (b) is Linear, and (c) is Miami.}
    \label{fig:linearCommonPattern}
\end{figure}

\begin{figure*}[t]
\centerline{\includegraphics[width=0.95\textwidth]{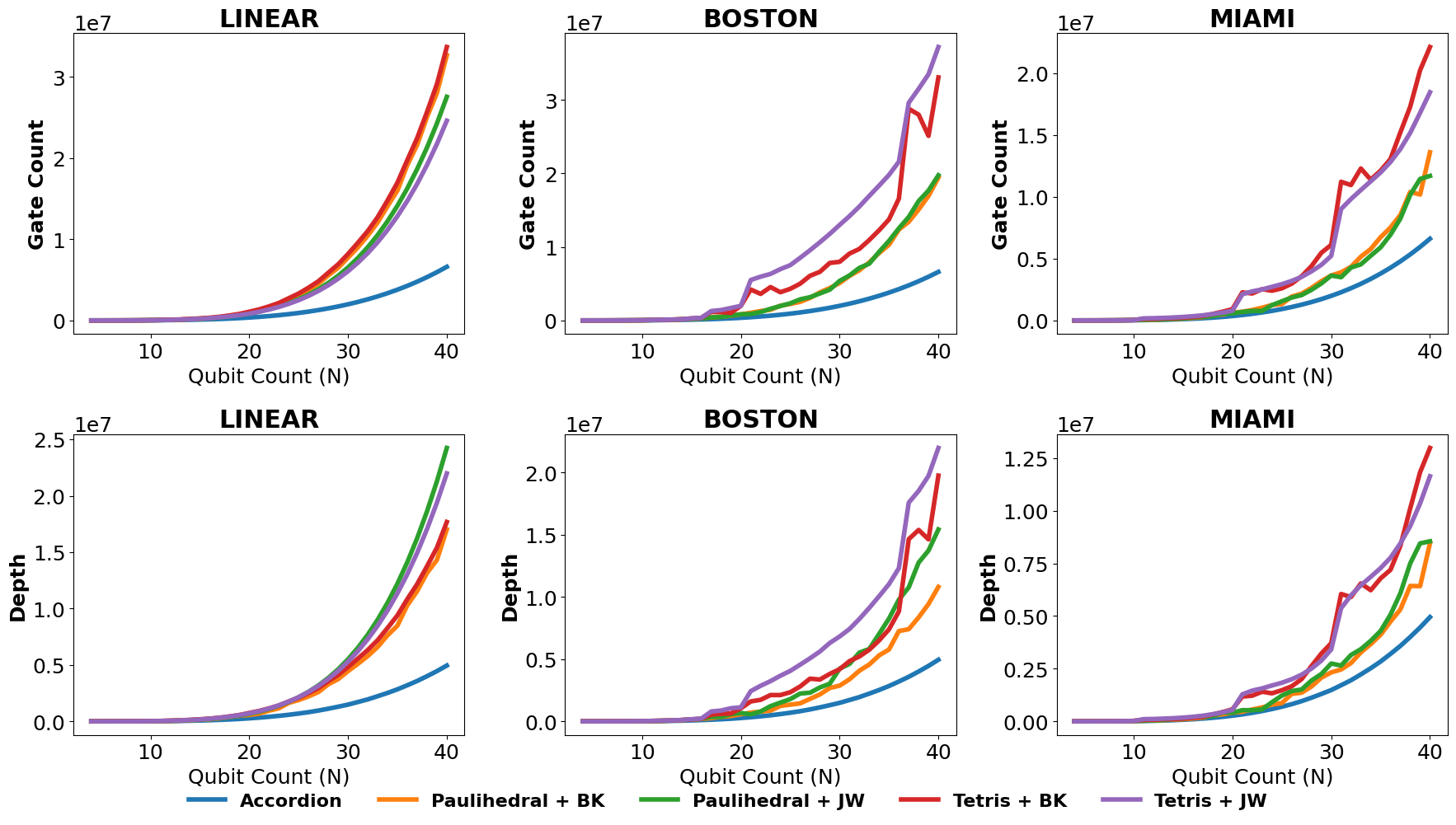}}
\caption{Trend of gate count and circuit depth usage among all approaches when increasing \# qubits in all-to-all Hamiltonians}
\label{fig:all-to-all-scale}
\end{figure*}



\section{Scalability for gate count and circuit depth}
\label{appendix:scalability}
We generate various benchmarks for both all-to-all and lattice Hamiltonians by increasing the problem size.
As is shown in Figure~\ref{fig:all-to-all-scale}, our approach consistently achieves lower gate count and circuit depth 
than existing methods for both all-to-all and lattice Hamiltonians 
across most hardware topologies. 
An exception arises in the Miami architecture for lattice Hamiltonians, 
where our method results in slightly higher circuit depth. 
This is primarily because our current design does not explicitly exploit 
the additional connectivity provided by the architecture, 
instead relying on a linearized execution pattern. 
Effectively leveraging such extra connectivity to further reduce circuit depth 
is an interesting direction for future work.

\newpage
\section{Upper bound for total gate count in \sysname{}}
\label{appendix:gate_count_bound}

We consider the problem of compiling a sum of second excitation terms for $n$ qubits, indexed by
\begin{equation}
    T=\{(p,q,r,s):1\le p<q<r<s\le n\}.
\end{equation}
Applying the Jordan-Wigner mapping, operators with index $(p,q,r,s)\in T$
are sent to strings of the form
\begin{equation}
    I^*A_pZ^*A_qI^*A_rZ^*A_sI^*,
    \label{eq:pauli-string}
\end{equation}
where $A_p,A_q,A_r,A_s\in\{X,Y\}$, and each $A_i$ is in the $i$th index of this string.
We define $\sigma(p,q,r,s):T\to S_n$ to reorder the numbers $[1,n]$ to the intervals in Table~\ref{table:TargetPerm}.
Our approach, Algorithm~\ref{algo:iteration} iterates through indices $(p,q,r,s)\in T$ and rearranges the qubits at each step according to the permutation $\sigma(p,q,r,s)$.
This is done by applying SWAPs to adjacent qubits, and bubble-sorting from the current permutation to $\sigma(p,q,r,s)$.
Then we synthesize all Pauli strings corresponding to the current index.

\begin{algorithm}[h]
\caption{Implementation for \sysname{}}
\label{algo:iteration}
\begin{algorithmic}[1]
\For{$p \in \{1, \ldots, n-3\}$}
    \For{$s \in \{p+3, \ldots, n\}$}
        \For{$r \in \{p+2, \ldots, s-1\}$}
            \For{$q \in \{p+1, \ldots, r-1\}$}
                \State \textit{Reverse order of iteration for each \{p,s,r\}}
                \State Go to permutation $\sigma(p, q, r, s)$
                \State Schedule Pauli strings
            \EndFor
        \EndFor
    \EndFor
\EndFor
\end{algorithmic}
\end{algorithm}

\begin{table}[h]
\caption{Target permutation $\sigma(p,q,r,s)$ for qubit order}
\label{table:TargetPerm}
\resizebox{\columnwidth}{!}{
\begin{tabular}{lccccccccc}
\toprule
Pauli Physical & $\text{I}_{1}$ & $\text{Z}_{2}$ & $\text{Z}_{1}$ & $\text{A}_{1}$ & $\text{A}_{3}$ & $\text{A}_{4}$ & $\text{A}_{2}$ & $\text{I}_{2}$ & $\text{I}_{3}$ \\ \midrule
Start & 1 & r+1 & p+1 & p & r & s & q & q+1 & s+1 \\
End   & p-1 & s-1 & q-1 & p & r & s & q & r-1 & n \\ \bottomrule
\end{tabular}}
\end{table}

Before describing the exponential, we define some shorthand for quantum gates.
For integers $1\le a<b\le n$, we define a \textbf{CNOT ladder} $C_{a,b}=\prod_{i=a}^{b-1}CNOT_{i,i+1}$, supported on qubits $a$ through $b$.
If $b>a$, let $C_{a,b}=C_{b,a}^\dagger$.
It can be shown that
\begin{equation}
    C_{a,b}C_{b,c}=C_{a,c}.
\end{equation}

For any Pauli string (\ref{eq:pauli-string}) apply permutation $\sigma(p,q,r,s)$. Then let $a$ be the index of the leftmost $Z$ ($\sigma(r+1)$), and $b$ the index of the rightmost $A$ ($\sigma(q)$). Write $U_{pqrs}$ to be a product of at most 8 $H$ and $S$ gates, supported on qubits $\sigma(\{p,q,r,s\})$, rotating the Pauli string into the Z basis. We implement the exponential with 
\begin{equation}
\label{eq:pauli-exp}
U_{pqrs}^\dagger
C_{a,b}
RZ(\theta)
C_{b,a}
U_{pqrs}
\end{equation}

\begin{lemma}
\label{lem:swap}
Algorithm~\ref{algo:iteration} generates $O(n^4)$ SWAP gates for the $n$ qubit second excitation term synthesis problem.
\end{lemma}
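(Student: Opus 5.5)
The plan is to bound the number of adjacent transpositions needed by bubble sort between consecutive target permutations in Algorithm~\ref{algo:iteration}. Bubble sort from $\pi$ to $\pi'$ uses exactly as many SWAPs as the inversion distance (Kendall tau distance) between them. So the total SWAP count is the sum of these distances over consecutive pairs of indices visited. We therefore classify each transition by which loop variable changes and bound each class separately.

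\emph{Innermost transitions} ($q \to q\pm 1$ with $p,r,s$ fixed). By Table~\ref{table:TargetPerm}, moving $q$ by one changes only a few things. A single index moves from the $Z_1$ block to the $I_2$ block or back, and the element labelled $A_2$ changes identity. All other blocks keep the same relative order. Only $O(1)$ elements change position, and each of them moves across the contiguous segment $A_1A_3A_4A_2$ and adjacent block boundaries. Concretely, old $q$ leaves the $A_2$ slot for the end of $Z_1$ (or the start of $I_2$), and $q\pm1$ enters the $A_2$ slot. Each such move crosses only the constant-size block $A_1A_3A_4$, so the inversion distance is $O(1)$, matching the ``3 SWAPs'' claim in the main text. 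There are at most $|T| = O(n^4)$ such transitions, contributing $O(n^4)$.

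\emph{Outer transitions}, where $r$, $s$, or $p$ changes. Here I would not try to compute the distance exactly. Any two permutations of $n$ elements are within inversion distance $\binom{n}{2} = O(n^2)$. The number of such transitions is the number of distinct triples $(p,r,s)$, plus lower-order counts for $(p,s)$ and $p$, giving $O(n^3)$. This would yield $O(n^5)$, which is too weak, so this step is the main obstacle.

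To fix it, I would show that when $r$ advances by one with $p,s$ fixed, the change is also $O(n)$. The reversal of the $q$-iteration order is what makes this work: the last $q$ of one $r$ iteration is adjacent to the first $q$ of the next. In that case only $O(1)$ indices change blocks, namely $r$, $r+1$, and the boundary $q$. Each crosses at most $O(n)$ others, because blocks are contiguous intervals, so the move costs $O(n)$. This gives $O(n^3)\cdot O(n)=O(n^4)$ for all $r$-transitions.

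The $s$-transitions and $p$-transitions, numbering $O(n^2)$ and $O(n)$ respectively, may each use the trivial $O(n^2)$ bound, contributing $O(n^4)$ and $O(n^3)$.

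Summing all classes gives $O(n^4)$ SWAP gates. The key technical step is verifying from Table~\ref{table:TargetPerm} that a change of a single index moves only $O(1)$ elements across $O(n)$ others, using that each block preserves the natural order of its interval.
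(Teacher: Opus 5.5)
Your proposal is correct and follows the paper's proof essentially step for step. You charge SWAPs to the loop variable that changes, get $O(1)$ (three) SWAPs per $q$-step, $O(n)$ per $r$-step thanks to the reversed $q$-loop, and use the trivial $O(n^2)$ sorting bound for the $O(n^2)$ changes of $s$ and $p$. Your version is a bit more careful than the paper's on two points: you note that at an $r$-boundary $q$ may also shift by one, and you justify the $O(n)$ cost by counting only inversions that involve the $O(1)$ indices changing blocks.
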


\begin{proof}
To bound the number of SWAP gates, it suffices to charge each SWAP to one of the variables $p,q,r,s$.
We record every time $(p,q,r,s)$ changes, and count the number of SWAPs needed to reach the new permutation.
We note that each time we do this, the maximum number of SWAPs we can incur is $O(n^2)$, the same number of steps needed to sort a list of $n$ elements.

\textbf{Case 1 (p), (s):}
These two loop bodies both run at most $O(n^2)$ times, thus we can incur at most $O(n^4)$ SWAPs iterating over p and s.

\textbf{Case 2 (q):}
This loop body runs $O(n^4)$ times. Each new iteration, we send $\sigma(p,q,r,s)$ to $\sigma(p,q+1,r,s)$, which costs 3 SWAPs. This can be seen by looking at Table~\ref{table:TargetPerm} and observing this action simply requires moving $A_2$ to the start of $Z_1$. Therefore we incur $O(n^4)$ SWAPs iterating over q.

\textbf{Case 3 (r):}
This loop body runs $O(n^3)$ times.
Each new iteration, we send $\sigma(p,q,r,s)$ to $\sigma(p,q,r+1,s)$.
We do not need to reset $q$ to $p+1$, since we reverse the order of the 4th level loop each time it is run. 
This requires us to send $A_3$ to the right of $I_2$, and the left of $Z_2$ to the old position of $A_3$ (Table~\ref{table:TargetPerm}).
We are shifting two qubits left and right at most $n$ steps, costing $O(n)$ SWAPs for the total action.
We note that the backwards iteration is simply the inverse action, thus this bound will still hold.
Therefore we incur $O(n^4)$ SWAPs interating over r.

Combining all three cases, this algorithm will generate at most $O(n^4)$ SWAP gates.
\end{proof}

\begin{lemma}
\label{lem:cnot} Algorithm~\ref{algo:iteration} generates $O(n^4)$ CNOT gates for the $n$ qubit second excitation term synthesis problem.
\end{lemma}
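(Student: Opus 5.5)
Split the CNOT gates produced by Algorithm~\ref{algo:iteration} into two sources: CNOTs coming from decomposed SWAPs, and CNOTs coming from the ladders $C_{a,b}$ and $C_{b,a}$ in the exponential (\ref{eq:pauli-exp}). The first source is handled directly by Lemma~\ref{lem:swap}. Each SWAP decomposes into 3 CNOTs, so there are $O(n^4)$ CNOTs of this kind. What remains is to bound the ladder CNOTs that survive cancellation. A naive bound fails: each ladder has length $b-a = O(n)$, and there are $8|T| = O(n^4)$ strings, giving only $O(n^5)$. The cancellation identity $C_{a,b}C_{b,c}=C_{a,c}$ therefore has to be used.

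First, consider consecutive strings inside one atomic group. By Table~\ref{table:TargetPerm}, $a=\sigma(r+1)$ is the left end of the $Z_2$ block and $b=\sigma(q)$ is the right end of the $A$ block, and both are fixed by $(p,q,r,s)$. Between two strings of the same group, the intermediate operator is $C_{b,a}U^{\phantom{\dagger}}_{pqrs}U'^{\dagger}_{pqrs}C_{a,b}$. The single-qubit basis changes act only on the four $A$ qubits, which sit at the top of the ladder. So I would split $C_{a,b}=C_{a,c}C_{c,b}$, where $c$ is the first $A$ qubit. The part $C_{a,c}$ commutes past the basis changes and cancels against its inverse, leaving only the $O(1)$ CNOTs of $C_{c,b}$. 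Hence each atomic group costs $O(n)$ ladder CNOTs for its outermost ladders plus $O(1)$ per string.

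Second, and this is the main step, the $O(n)$ boundary cost must be amortized across atomic groups as $q$ varies. When $q\to q\pm1$, the permutation changes by only 3 SWAPs, and these are confined to the $A$ and $Z_1$ boundary region. Conjugating $C_{a,c}$ by these SWAPs leaves the portion on the $Z_2$ qubits and the untouched $Z_1$ qubits unchanged. So the outgoing ladder of one atomic group and the incoming ladder of the next cancel except on $O(1)$ qubits adjacent to the swapped positions. I would make this precise by writing the ladder as $C_{a,d}C_{d,c}$, with $d$ the qubit just outside the SWAP support. The SWAPs commute with $C_{a,d}$, since they act on disjoint qubits except for the junction qubit, where a short check with the definition of the CNOT ladder is needed. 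That gives $O(1)$ residual CNOTs per $q$-step, for $O(n^4)$ in total.

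Finally, at each change of $r$, $s$, or $p$ we pay a fresh $O(n)$ ladder without any cancellation. These changes occur $O(n^3)$ times, contributing $O(n^4)$. Summing the SWAP CNOTs, the $O(1)$ per-string residuals, the $O(1)$ per-$q$-transition residuals, and the $O(n)$ cost per $(p,r,s)$ change gives $O(n^4)$.

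The step I expect to be the obstacle is the commutation of the 3 transition SWAPs with the shared ladder segment at the junction qubit. If that commutation fails, the fallback is to account the mismatch as $O(1)$ extra CNOTs per transition, which leaves the bound unchanged.
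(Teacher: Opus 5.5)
Your proposal is correct and follows essentially the same charging argument as the paper: same-index transitions leave $O(1)$ CNOTs, each $q\to q\pm1$ transition leaves $O(1)$ because its 3 SWAPs live on qubits $b-3,\dots,b$, and each of the $O(n^3)$ changes of $(p,r,s)$ costs a fresh $O(n)$ ladder; you also count the SWAP-induced CNOTs here, which the paper defers to the final theorem. One fix is needed: split the ladder at $b-4$, one qubit before the first $A$ qubit, as the paper does, because your split at the first $A$ qubit $c$ overlaps the basis change on qubit $c$; with the split at $b-4$, $C_{a,b-4}$ is disjoint from both the basis changes and the SWAPs, and your junction-qubit concern disappears.
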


\begin{proof}
Each CNOT ladder is located between two Pauli exponentials (discounting the first and last one), thus we can categorize these based off of which indices they lie between.

\textbf{Case 1, same index:}
These CNOTs lie between two different Pauli exponentials, thus we write the first rotation as $U_{pqrs}$ and second as $\bar U_{pqrs}$.
Recall that $b$ is the location of the rightmost $A$, thus $U_{pqrs}$ and $\bar U_{pqrs}$ are supported on qubits $b-3$ to $b$.
Split $C_{b,a}$ into $C_{b,b-4}C_{b-4,a}$ and $C_{a,b}$ into $C_{a,b-4}C_{b-4,b}$.
Cancel $C_{b-4,a}$ and $C_{a,b-4}$ by commuting them through $U_{pqrs}U^\dagger_{pqrs}$, as these are supported on disjoint qubits.

\begin{align}
&U_{pqrs}^\dagger
C_{a,b}
RZ(\theta)
C_{b,a}
U_{pqrs}   
\bar U_{pqrs}^\dagger
C_{a,b}
RZ(\theta)
C_{b,a}
\bar U_{pqrs}   
\nonumber\\
=&
\ \ldots
C_{b,b-4}
\cancel{C_{b-4,a}}
U_{pqrs}
\bar U_{pqrs}^\dagger
\cancel{C_{a,b-4}}
C_{b-4,b}
\ldots
\end{align}

We are left with at most $8$ CNOTs, thus each case here is charged $O(1)$ CNOTs per occurrence.
Each second excitation term has $2^4$ choices of $A$, thus this case is charged $O(n^4)$ times, bounding the number of CNOTs generated here by $O(n^4)$.

\textbf{Case 2, $(p,q,r,s)$ to $(p,q+1,r,s)$}
We make a similar argument as in Case 1, but shift some of the gates by one qubit.
Since the second Pauli exponential uses $q+1$ instead of $q$, we have CNOT ladder $C_{a,b+1}$ and $U_{pqrs}U_{p(q+1)rs}^\dagger$ is supported on qubits $b-3$ to $b+1$.
Furthermore, we have 3 SWAP gates from Case 2 of Lemma~\ref{lem:swap}, say $SW$ supported on $b-3$ to $b$.

\begin{align}
&
\ldots
RZ(\theta)
C_{b,a}
U_{pqrs}   
SW
U_{p(q+1)rs}^\dagger
C_{a,b+1}
RZ(\theta)
\ldots
\nonumber\\
=&
\ \ldots
C_{b,b-4}
\cancel{C_{b-4,a}}
U_{pqrs}
SW
U_{p(q+1)rs}^\dagger
\cancel{C_{a,b-4}}
C_{b-4,b+1}
\ldots
\end{align}

This case incurs 9 cost per charge, and is charged $O(n^4)$ times (once for each $(p,q,r,s)$.
Note an identical argument works if the iteration direction is reversed.
Thus we have at most $O(n^4)$ CNOTs between different indices.

\textbf{Case 3, remainder:}
This case can only be charged $O(n^3)$ times, as it only occurs when $(p,r,s)$ changes.
Each CNOT ladder consists of at most $n$ CNOTs, thus there are $O(n^4)$ CNOTs in this case.

All three cases combine generate at most $O(n^4)$ CNOTs, giving the desired bound.

\end{proof}

\begin{theorem}
\label{thm:gate-count}
Algorithm~\ref{algo:iteration} generate $O(n^4)$ quantum gates for the $n$ qubit second excitation term synthesis problem.
\end{theorem}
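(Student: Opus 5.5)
The plan is to partition the gates emitted by Algorithm~\ref{algo:iteration} into three disjoint classes and bound each by $O(n^4)$. The classes are SWAP gates, CNOT gates, and single-qubit gates; the last consist of the basis changes $U_{pqrs}$, $U_{pqrs}^\dagger$ and the $RZ(\theta)$ rotations. Every gate produced by the exponential implementation (\ref{eq:pauli-exp}) or by the permutation steps falls into exactly one class. The total is then a sum of three $O(n^4)$ terms, which is $O(n^4)$.

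For the first two classes I would invoke the earlier lemmas directly. Lemma~\ref{lem:swap} bounds the SWAP gates by $O(n^4)$. If SWAPs are decomposed into CNOTs, as in the evaluation metrics, each SWAP contributes only a constant factor of 3. Lemma~\ref{lem:cnot} bounds the CNOTs inside the ladders $C_{a,b}$ and $C_{b,a}$, after the stated cancellations, by $O(n^4)$.

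The remaining work is the single-qubit gates, which I would count per Pauli exponential. The number of exponentials is $|T|$ times the number of $X/Y$ assignments per index. Here $|T|=\binom{n}{4}=O(n^4)$, and there are at most $2^4$ (in fact 8) assignments, so there are $O(n^4)$ exponentials. Each exponential contributes exactly one $RZ(\theta)$. It also contributes $U_{pqrs}$ and $U_{pqrs}^\dagger$, each defined as a product of at most 8 $H$ and $S$ gates, so at most 17 single-qubit gates per exponential. This gives $O(n^4)$ single-qubit gates, even before using any of the extra cancellations from the atomic group optimizations; those cancellations only reduce the count.

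The main subtlety is making sure no gate is left uncounted. In particular, the first and last CNOT ladders are excluded from the charging in Lemma~\ref{lem:cnot}. Each of these has at most $n$ CNOTs, so they add only $O(n)$. Similarly, the initial permutation and any final permutation that restores the qubit order cost $O(n^2)$ SWAPs each. All of these are lower order. Summing the three classes together with these boundary terms establishes the theorem.
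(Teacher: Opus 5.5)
Your proposal is correct and follows the paper's own proof. The paper likewise bounds the single-qubit gates by at most 17 per exponential over $O(n^4)$ exponentials, and takes the two-qubit gates from Lemma~\ref{lem:swap} and Lemma~\ref{lem:cnot}. Your extra accounting of the first and last CNOT ladders and of the initial permutation is a harmless refinement that the paper leaves implicit in those lemmas.
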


\begin{proof}
From \ref{eq:pauli-exp}, we see every exponential has at most 17 single-qubit gates, thus there will be $O(n^4)$ single-qubit gates in the final circuit. By Lemma~\ref{lem:swap} and Lemma~\ref{lem:cnot}, we generate at most $O(n^4)$ two-qubit gates. Thus in total, we will generate at most $O(n^4)$ quantum gates.
\end{proof}

\begin{corollary}
Algorithm~\ref{algo:iteration} generates a circuit of depth at most $O(n^4)$
\end{corollary}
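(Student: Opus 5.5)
The depth bound will follow directly from Theorem~\ref{thm:gate-count}, using the elementary fact that the depth of any circuit is at most its total number of gates.

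First I fix the notion of depth. Partition the gates into layers, where each layer is a set of gates acting on pairwise disjoint qubits. The depth is the minimum number of such layers.

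Next I give a trivial scheduling argument. Take the gates in the order Algorithm~\ref{algo:iteration} emits them and place each gate in its own layer. Each layer then contains one gate, so it is trivially valid. This gives a valid layering with as many layers as gates, so
\[
\mathrm{depth} \le \#\mathrm{gates}.
\]
This holds regardless of which gate types appear, because every gate (single-qubit, CNOT, or SWAP) occupies exactly one layer.

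Counting conventions need a short check. If SWAPs are decomposed into three CNOTs, as in the evaluation, the gate count grows by at most a constant factor, so the count is still $O(n^4)$. Likewise, the postprocessing cancellations used in the proofs of Lemmas~\ref{lem:swap} and~\ref{lem:cnot} only remove gates, so they cannot increase depth.

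Finally, I invoke Theorem~\ref{thm:gate-count}. The total number of gates is $O(n^4)$, so the depth is $O(n^4)$.

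I expect no real obstacle here. The only subtle point is that the bound must refer to the circuit after the cancellations are applied, because the $O(n^4)$ count in Lemma~\ref{lem:cnot} relies on those cancellations. I would state explicitly that depth is measured on that same post-cancellation circuit.

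The bound is loose, since it ignores all parallelism. If a sharper statement were wanted, one could instead amortize odd-even transposition sorting ($O(n)$ depth) over the $O(n^3)$ changes of $(p,r,s)$, as in \S\ref{sec:complexity}. That refinement is not needed for the corollary.
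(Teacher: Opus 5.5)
Your proposal is correct and takes the same approach as the paper, which also obtains the depth bound immediately from Theorem~\ref{thm:gate-count} using depth $\le$ total gate count. Your additional remarks, that SWAP decomposition changes the count by only a constant factor and that depth should be measured on the post-cancellation circuit, are harmless clarifications that the paper leaves implicit.
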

\begin{proof}
This follows immediately from Theorem~\ref{thm:gate-count}.
\end{proof}

\onecolumn

\section{Additional Figures}
\label{appendix:figures}

\begin{figure}[h]
\centerline{\includegraphics[width=\textwidth]{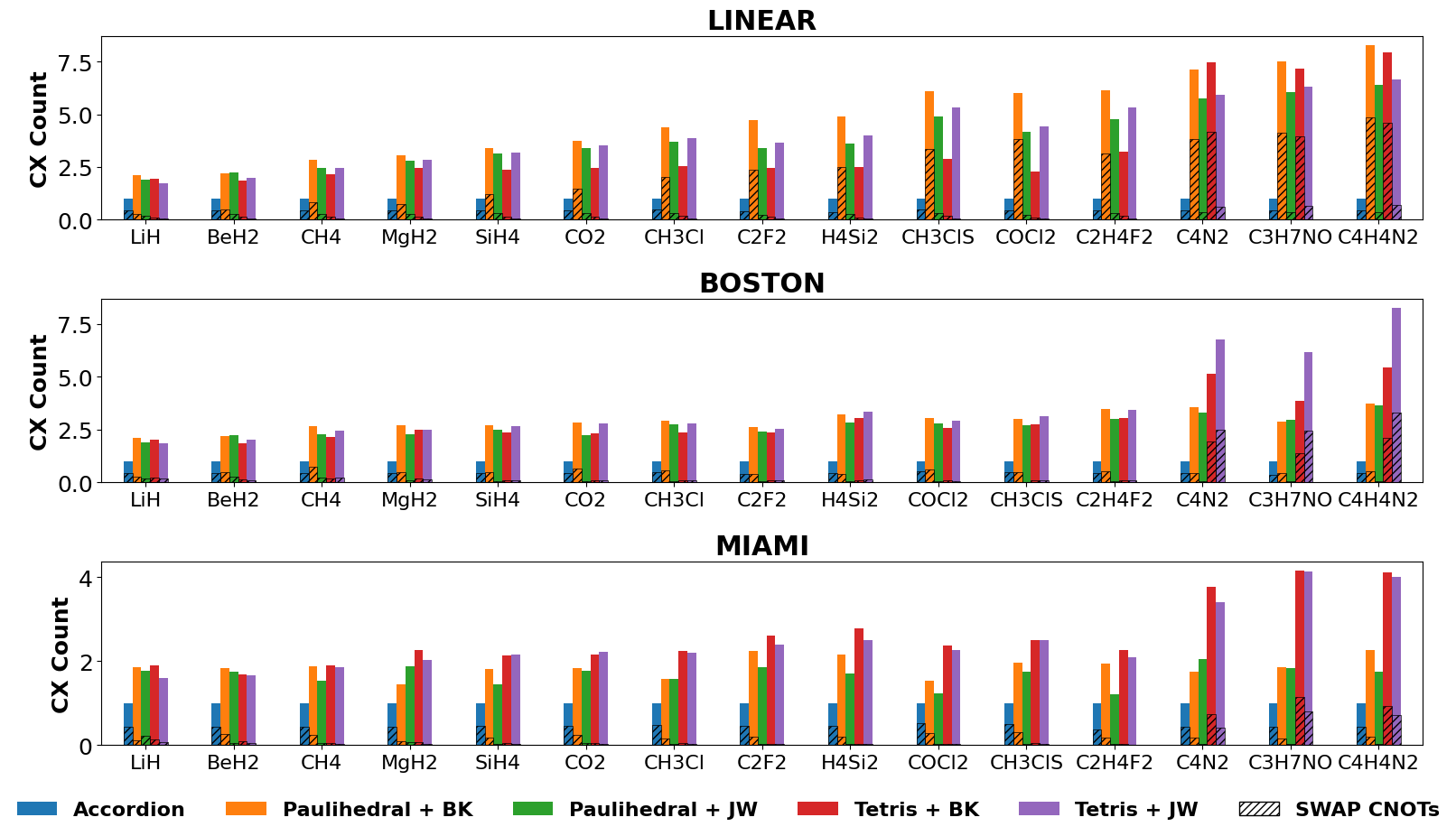}}
\caption{CNOT count for \sysname{}, Paulihedral, and Tetris with Jordan-Wigner (JW) and Bravyi-Kitaev (BK) qubit mapper on different molecules and architectures (normalized to \sysname{}). SWAP-induced CNOTs are overlayed.}
\label{fig:cx}
\end{figure}

\end{document}